\newtheorem{theorem}{Theorem}[section]
\newtheorem{remark}{Remark}[section]
\newtheorem{proposition}{Proposition}[section]
\newtheorem{Corollary}{Corollary}[section]
\numberwithin{equation}{subsection}
\crefname{equation}{}{}
\newcommand{\N}{\mathbb{N}}
\newcommand{\Z}{\mathbb{Z}}
\newcommand{\R}{\mathbb{R}}
\newcommand{\C}{\mathbb{C}}
\newcommand{\J}{\mathscr{J}_4}  
\newcommand{\Mgen}{\mathcal{M}}
\newcommand{\M}{\mathcal{M}^2}  
\renewcommand{\L}{\mathcal{L}}  
\newcommand{\EEE}{\mathcal{E}^{(3)}}  
\newcommand{\EE}{\mathcal{E}^{(2)}}  
\newcommand{\Ef}{\mathcal{E}^{(1)}_1}  
\newcommand{\Es}{\mathcal{E}^{(1)}_2}  
\newcommand{\D}{\mathscr{D}}  
\newcommand{\I}{\mathcal{I}}  
\renewcommand{\d}{\partial}  
\renewcommand{\i}{\mathrm{\bf{i}}\,}  
\newcommand{\bz}{\bar{z}}  
\newcommand{\pbz}{p_{\bz}}  
\newcommand{\pxi}{p_{\xi}}
\newcommand{\peta}{p_{\eta}}
\DeclareMathOperator{\Span}{Span}
\begin{document}


\thispagestyle{empty}

\begin{center}
{\LARGE \textsf{\textbf{
  Superintegrable metrics on surfaces admitting integrals of degrees 1 and 4}}
}
\end{center}

\vspace{0.1cm}

\begin{center}
{\large Pavel Novichkov}
\end{center}

\begin{center}
\textsl{
  National Research University Higher School of Economics,\\
  Faculty of Mathematics, Moscow, Russia
}
\end{center}

\begin{center}
Master's thesis\\
Supervisor: Vsevolod Shevchishin, Dr.rer.nat., Dr.habil.\\
\vspace{0.2cm}
June 2015
\end{center}

\vspace{0.4cm}
\hrule
\vspace{0.3cm}
\noindent{\sffamily \bfseries Abstract} \\[0.1cm]
We study Riemannian metrics on~surfaces whose geodesic flows are superintegrable with one integral linear in~momenta and one integral quartic in~momenta.
The main results of~the~work are~local description of~such metrics in~terms of~ordinary differential equations, integration of the~equations, and~description of~the corresponding Poisson algebra of~integrals of~motion.
We also give examples of~such metrics that can be extended to the~sphere~$S^2$, and study the~group of~symmetries of~the problem.
\vskip 10pt
\hrule

\tableofcontents

\newpage

\section{Introduction}

Let~$\Mgen$~be a~smooth $n$-dimensional Riemannian manifold with metric~$g = (g_{ij})$.
Let~$g^* = (g^{ij})$ be a dual metric tensor which defines a~metric on the~cotangent bundle~$T^*\Mgen$.
It~is well known that the~geodesic flow on~$\Mgen$ can be described as the~Hamiltonian flow on~$T^*\Mgen$ with the~Hamiltonian~$H_g := \frac12 g^*$.
The~corresponding mechanical system on~$T^*\Mgen$ is called {\it free}.
$H_g$ is called its {\it kinetic energy}.
Recall also that a~Hamiltonian mechanical system with Hamiltonian~$H$ defined on a~symplectic manifold~$(X,\omega)$ is called {\it (Liouville) integrable} if it allows~$n$ functionally independent integrals~$H = X_1,\, X_2,\ldots,X_n$ in~involution.
An~integrable system is called {\it superintegrable} if it allows further functionally independent integrals~$Y_1,\, Y_2,\ldots,Y_k,\, 1 \leq k \leq n - 1$ (not necessarily in involution with each other nor with~$X_i$).
If a~free Hamiltonian system is superintegrable then the~corresponding metric~$g$ is also called {\it superintegrable}.

Classical examples of~integrable systems include Kepler-Coulomb problem, 2 and 3-dimensional isotropic harmonic oscillator, and hydrogen atom.
All of them turn out to~be superintegrable.
For~instance, Kepler-Coulomb problem has 7 integrals of motion (energy, 3 components of angular momentum vector and 3 components of Laplace-Runge-Lenz vector).
These integrals are related by 2 scalar equations, hence there are 5 functionally independent integrals.

Superintegrable systems are of great interest for modern mathematics and mathematical physics due to several special properties they have:

\begin{itemize}
\item
  if a~superintegrable system has $k$ extra integrals, then its trajectories lie on $(n - k)$-dimensional submanifolds;
\item
  in~particular, when~$k = n - 1$ (so called {\it maximal superintegrability})
  all trajectories lie on 1-dimensional submanifolds, hence all finite trajectories are closed;
\item
  integrals of motion generate a~nontrivial Poisson algebra;
\item
  extra integrals correspond to {\it hidden symmetries};
\item
  in~quantum physics, superintegrability leads to {\it accidental degeneracy} of energy levels
\end{itemize}
etc.~\cite{Winternitz}.

One natural question that arises here is how to~find and describe all superintegrable systems in simple cases.
The~simplest nontrivial possibility in~terms of dimension is~$n = 2$.
A~2-dimensional superintegrable system has 2 extra integrals besides energy.
Obviously, such a~system is maximally superintegrable.
Let us also restrict ourselves to considering free superintegrable systems (i.e., searching for superintegrable metrics).
Finally, let us assume that integrals of~motion are polynomial in~momenta.
This assumption is motivated by real-world physical examples of such systems, and by the~Whittaker theorem (if a~system allows an~integral analytical in~momenta then it also allows an~integral polynomial in~momenta~\cite{Whittaker}).

One can easily show that each homogeneous component of~a~polynomial integral is an~integral by~itself, therefore it is sufficient to consider homogeneous integrals of~degrees~$l$ and~$m$.
Let us call this case~``$l+m$''.

In~the case~``$1+1$'' each superintegrable~metric has constant curvature.
Cases~``$1+2$'' and~``$2+2$'' were described locally in~the~classical work by Koenigs~\cite{Koenigs}.
Finally, local description of~``$1+3$''~metrics was obtained in~\cite{Matveev_Shevchishin,Valent_Duval_Shevchishin}.

The~aim of~this work is to~describe superintegrable metrics in~the next case~``$1+4$''.
We develop methods and ideas used in~\cite{Matveev_Shevchishin} to~solve the~case~``$1+3$''.
In Chapter~\ref{ch:local_description}, we formulate and prove the~theorem which gives local description of ``$1+4$'' superintegrable metrics in~the~neighborhood of almost every point.
In Chapter~\ref{ch:further}, we study symmetries of such systems and obtain algebraic forms of the~equations describing the metrics.
It allows us to construct examples of the metrics that can be extended to the~sphere.

The~author is grateful to his advisor Vsevolod Shevchishin for
his continuous support and many useful discussions.

\section{Local description of 2-dimensional superintegrable metrics admitting integrals of degrees 1 and 4}
\label{ch:local_description}

\subsection{Main result}

\begin{theorem}
  \label{th:local}
  Let $\M$ be a~smooth 2-dimensional manifold equipped with a~Riemannian metric~$g = (g_{ij})$.
  Let $H = \frac12 g^{ij}p_ip_j$ be the~corresponding Hamiltonian.

  Suppose the~geodesic flow of~$g$ admits integrals~$L$~and~$F$ of degrees 1 and 4 such that~$L,\, F$ and~$H$ are functionally independent.
  Then in~the~neighborhood of almost every point~$X \in \M$ (such that $L|_{T^*_X \M} \not\equiv 0$) there exist local coordinates~$(x,y)$ such that the~metric has the~form~$g = \frac{1}{h_x^2} (dx^2 + dy^2)$ where $h_x \equiv \dfrac{dh}{dx}$ and~$h(x)$ is a~real function that satisfies one of the~following systems of~equations:
  \begin{subequations}
    \label{eq:main}
    \begin{align}
      \label{eq:trig}
      &
      \begin{cases}
        \begin{gathered}
          2 \left(
            A_0 h - A_1
          \right)
          \left(
            A_3 \cos(\mu x) + \frac{A_4}{\mu} \sin(\mu x)
          \right) h_x^3\\
          + \left[
            \left(
              \mu \left(
                A_0 h - 2 A_1
              \right) h + 2\, \frac{A_2}{\mu}
            \right)
            \left(
              A_3 \sin(\mu x) - \frac{A_4}{\mu} \cos(\mu x)
            \right) + A_5
          \right] h_x^2\\
          - \left(
            A_3 \cos(\mu x) + \frac{A_4}{\mu} \sin(\mu x)
          \right)^2 = 0,
        \end{gathered}\\
        \begin{gathered}
          \left(
            A_0 h - A_1
          \right)^2 h_x^3 \,
          + \left[
            \frac{\mu}{2} \left(
              A_0 h - 2 A_1
            \right) h
            \left(
              \frac{\mu}{2} \left(
                A_0 h - 2 A_1
              \right) h + 2\, \frac{A_2}{\mu}
            \right)
          \right.\\
          \left.
            +\, 2\, \frac{A_0}{\mu} \left(
              A_3 \sin(\mu x) - \frac{A_4}{\mu} \cos(\mu x)
            \right) + \frac{A_6}{4 \mu^2}
          \right] h_x\\
          -\, 2 \left(
            A_0 h - A_1
          \right)
          \left(
            A_3 \cos(\mu x) + \frac{A_4}{\mu} \sin(\mu x)
          \right) = 0,
        \end{gathered}
      \end{cases}\\
      \label{eq:hyper}
      &
      \begin{cases}
        \begin{gathered}
          2 \left(
            A_0 h - A_1
          \right)
          \left(
            A_3 \cosh(\mu x) + \frac{A_4}{\mu} \sinh(\mu x)
          \right) h_x^3\\
          - \left[
            \left(
              \mu \left(
                A_0 h - 2 A_1
              \right) h - 2\, \frac{A_2}{\mu}
            \right)
            \left(
              A_3 \sinh(\mu x) + \frac{A_4}{\mu} \cosh(\mu x)
            \right) - A_5
          \right] h_x^2\\
          - \left(
            A_3 \cosh(\mu x) + \frac{A_4}{\mu} \sinh(\mu x)
          \right)^2 = 0,
        \end{gathered}\\
        \begin{gathered}
          \left(
            A_0 h - A_1
          \right)^2 h_x^3 \,
          - \left[
            \frac{\mu}{2} \left(
              A_0 h - 2 A_1
            \right) h
            \left(
              \frac{\mu}{2} \left(
                A_0 h - 2 A_1
              \right) h - 2\, \frac{A_2}{\mu}
            \right)
          \right.\\
          \left.
            -\, 2\, \frac{A_0}{\mu} \left(
              A_3 \sinh(\mu x) + \frac{A_4}{\mu} \cosh(\mu x)
            \right) + \frac{A_6}{4 \mu^2}
          \right] h_x\\
          -\, 2 \left(
            A_0 h - A_1
          \right)
          \left(
            A_3 \cosh(\mu x) + \frac{A_4}{\mu} \sinh(\mu x)
          \right) = 0,
        \end{gathered}
      \end{cases}\\
      \label{eq:linear}
      &
      \begin{cases}
        \begin{gathered}
          2 \left(
            A_0 h - A_1
          \right)
          \left(
            A_3 + A_4 x
          \right) h_x^3
          + \left(
            A_4 h \left(
              2 A_1 - A_0 h
            \right)
          \right.\\
          \left.
            + A_2 x \left(
              2 A_3 + A_4 x
            \right) + A_5
          \right) h_x^2
          - \left(
            A_3 + A_4 x
          \right)^2 = 0,
        \end{gathered}\\
        \begin{gathered}
          \left(
            A_0 h - A_1
          \right)^2 h_x^3
          + \left(
            \left(
              A_0 h - 2 A_1
            \right) A_2 h
            + A_0 x \left(
              2 A_3 + A_4 x
            \right) + A_6
          \right) h_x\\
          -\, 2 \left(
            A_0 h - A_1
          \right)
          \left(
            A_3 + A_4 x
          \right) = 0
        \end{gathered}
      \end{cases}
    \end{align}
  \end{subequations}
  Here $\mu > 0, \, A_0, \ldots, A_6$ are arbitrary real constants.

  Further, the~linear integral~$L$ is given by~$L = p_y$.

  Besides, in the~case~(a) the~metric admits 2-parametric family of quartic integrals of the~form
  \begin{equation}
    F = C_+ F_+ + C_- F_-,
  \end{equation}
  where
  \begin{equation}
    \begin{gathered}
      F_+ = e^{\mu y} \sum_{k=0}^4 a_k(x) p_x^{4-k} p_y^k,\\
      F_- = e^{-\mu y} \sum_{k=0}^4 (-1)^k a_k(x) p_x^{4-k} p_y^k.
    \end{gathered}
  \end{equation}
  Here $C_+,\, C_-$ are arbitrary real constants, and functions $a_k(x)$ are given by
  \begin{equation}
    \begin{aligned}
      & a_0(x) = \frac{A_0}{\mu} h_x^4,\\
      & a_1(x) = (A_1 - A_0 h) h_x^3,\\
      & a_2(x) =
      \left(
        \frac{1}{2} \mu A_0 h^2 + \frac{2A_0}{\mu} h_x^2 - \mu A_1 h + \frac{A_2}{\mu}
      \right) h_x^2,\\
      & a_3(x) = (A_1 - A_0 h) h_x^3 - \left(
        A_3 \cos (\mu x) + \frac{A_4}{\mu} \sin (\mu x)
      \right),\\
      & \begin{gathered}
        a_4(x) = \left(
          \frac{\mu}{2} (A_0 h - 2 A_1) h + \frac{A_0}{\mu} h_x^2 + \frac{A_2}{\mu}
        \right) h_x^2
        + \left(
          A_3 \sin (\mu x) - \frac{A_4}{\mu} \cos (\mu x)
        \right).
      \end{gathered}
    \end{aligned}
  \end{equation}

  Next, in the~case~(b) the~metric admits 2-parametric family of quartic integrals of the~form
  \begin{equation}
    \begin{aligned}
      F = &\,\, C_e \cos(\mu y + \phi_0)\, \left(a_0(x)\, p_x^4 + a_2(x)\, p_x^2 p_y^2 + a_4(x)\, p_y^4\right)\\
      + &\,\, C_e \sin(\mu y + \phi_0)\, \left(a_1(x)\, p_x^3 p_y + a_3(x)\, p_x p_y^3\right),
    \end{aligned}
  \end{equation}
  where~$C_e,\, \phi_0$ are arbitrary real constants, and functions~$a_k(x)$ are given by
  \begin{equation}
    \begin{aligned}
      & a_0(x) = - \frac{A_0}{\mu} h_x^4,\\
      & a_1(x) = (A_1 - A_0 h) h_x^3,\\
      & a_2(x) =
      \left(
        \frac{1}{2} \mu A_0 h^2 - \frac{2A_0}{\mu} h_x^2 - \mu A_1 h - \frac{A_2}{\mu}
      \right) h_x^2,\\
      & a_3(x) = (A_1 - A_0 h) h_x^3 - \left(
        A_3 \cosh (\mu x) + \frac{A_4}{\mu} \sinh (\mu x)
      \right),\\
      & \begin{gathered}
        a_4(x) = \left(
          \frac{\mu}{2} (A_0 h - 2 A_1) h - \frac{A_0}{\mu} h_x^2 - \frac{A_2}{\mu}
        \right) h_x^2
        + \left(
          A_3 \sinh (\mu x) + \frac{A_4}{\mu} \cosh (\mu x)
        \right).
      \end{gathered}
    \end{aligned}
  \end{equation}

  Finally, in the~case~(c) the~metric admits a~quartic integral of the~form
  \begin{equation}
    \begin{gathered}
      F = y \left(
        a_0(x) p_x^4 + a_2(x) p_x^2 p_y^2 + a_4(x) p_y^4
      \right) +
      a_1(x) p_x^3 p_y + a_3(x) p_x p_y^3\,,
    \end{gathered}
  \end{equation}
  where functions~$a_k(x)$ are given by
  \begin{equation}
    \begin{aligned}
      & a_0(x) = A_0 h_x^4,\\
      & a_1(x) = (A_1 - A_0 h) h_x^3,\\
      & a_2(x) = (2 A_0 h_x^2 + A_2) h_x^2,\\
      & a_3(x) = (A_1 - A_0 h) h_x^3 - (A_3 + A_4 x),\\
      & a_4(x) = -A_4 + A_2 h_x^2 + A_0 h_x^4.
    \end{aligned}
  \end{equation}
\end{theorem}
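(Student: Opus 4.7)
The plan is to exploit the Killing symmetry generated by $L$ to choose adapted coordinates, then classify the possible quartic integrals via the action of this symmetry, and finally reduce the condition $\{H,F\}=0$ to an ODE for the metric function $h$.

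First I would normalize coordinates. Since $L$ is linear in momenta and Poisson-commutes with $H$, the associated vector field $\xi$ on $\M$ is Killing. At a point where $L|_{T^*_X\M}\not\equiv 0$ one has $\xi_X\neq 0$, so $\xi$ can be straightened to $\partial/\partial y$; then the metric coefficients depend only on $x$, and combining a reparametrization of $x$ with a shift $y\mapsto y+\phi(x)$ puts the metric in the isothermal form $g=\lambda(x)^2(dx^2+dy^2)$. Writing $\lambda=1/h_x$ yields the claimed form, and $L=p_y$ holds automatically.

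Second I would analyze the operator $T(F):=\{L,F\}=-\partial_y F$ on the space $\J$ of homogeneous quartic integrals. Since $\{L,H\}=0$, the Jacobi identity shows $T\J\subset\J$. Writing $F=\sum_{k=0}^4 a_k(x,y)\,p_x^{4-k}p_y^k$, the condition $\{H,F\}=0$ yields an overdetermined linear PDE system for the coefficients, and after Fourier-type separation of $y$ the solution space is finite-dimensional modulo the reducible integrals $H^2,HL^2,L^4$ (on which $T$ vanishes). Functional independence of $H,L,F$ ensures that $F$ and $T(F)$ span a genuinely two-dimensional $T$-invariant subspace, and the Jordan normal form of $T$ restricted to this subspace must be one of exactly three possibilities: two real eigenvalues $\pm\mu$ (case~(a), with eigenvectors $F_\pm$ proportional to $e^{\pm\mu y}$); a complex-conjugate pair $\pm i\mu$ (case~(b), whose real and imaginary parts give the $\cos(\mu y+\phi_0),\sin(\mu y+\phi_0)$ structure); or a rank-two nilpotent Jordan block (case~(c), giving the linear-in-$y$ dependence $F=y\,F_0+F_1$).

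Third I would substitute each ansatz into $\{H,F\}=0$. With the $y$-dependence fixed, the bracket becomes a homogeneous polynomial of degree $5$ in $(p_x,p_y)$, and its six vanishing coefficients yield ODEs in $x$ for $a_0,\ldots,a_4$ and $h$. The extreme coefficients (of $p_x^5$ and $p_y^5$) determine $a_0$ and $a_4$ up to constants; the remaining equations are then solved recursively for $a_1,a_2,a_3$ in terms of $h,h_x$, the appropriate trigonometric/hyperbolic/linear functions of $\mu x$, and integration constants $A_0,\ldots,A_4$, producing the explicit formulas stated in the theorem. After the $a_k$ are eliminated, the two leftover equations involve only $h$ and are precisely the pairs \cref{eq:trig}, \cref{eq:hyper}, or \cref{eq:linear}; the constants $A_5,A_6$ appear at this stage as integration constants of these compatibility relations.

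I expect the main obstacle to be the classification in the second step: establishing that the $T$-invariant subspace through $F$ is exactly two-dimensional and that its Jordan type must belong to one of the three listed options, with no other degenerate configuration compatible with functional independence of $H,L,F$ and with $F$ being genuinely of degree $4$. The subsequent ODE elimination is combinatorially heavy but mechanical; the principal care is in tracking the seven constants $A_0,\ldots,A_6$ and in verifying that the reduced pair of ODEs in each case is equivalent to, not merely a consequence of, the full system $\{H,F\}=0$.
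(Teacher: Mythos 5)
Your overall architecture (straighten the Killing field, study the operator $F\mapsto\{L,F\}$ on the finite-dimensional space $\J$ of quartic integrals, split into three spectral cases, then reduce $\{H,F\}=0$ to ODEs for $h$) is the paper's architecture, but the two places you yourself flag as delicate are exactly where your argument has real gaps. First, the trichotomy. As stated, your step 2 is not correct: if $F$ is an eigenvector, $\Span(F,\{L,F\})$ is one-dimensional, and nothing in the Jordan theory of a real operator forces the eigenvalues on a two-dimensional invariant subspace to be of the form $\pm\mu$, $\pm i\mu$, or a nilpotent block — a priori you could have $\alpha\pm i\beta$ with $\alpha\beta\neq0$, two unrelated real eigenvalues, or a Jordan block with nonzero eigenvalue. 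The paper does not try to settle this by linear algebra alone: it complexifies $\J$, takes an arbitrary nonzero complex eigenvalue $\mu$, derives the equations with complex parameters, and only afterwards (in the section on real solutions) shows that a real solution $h$ of the resulting system forces $\bar\mu=\pm\mu$, i.e. $\mu$ real or purely imaginary, via a rigidity statement for the parameter tuple $(\mu,A_k)$ carried over from the ``$1+3$'' case, together with a rescaling making the $A_k$ real; real $\mu$ then gives case (a), imaginary $\mu$ case (b). For the case where the spectrum is $\{0\}$, the paper uses not a nilpotent block on a cyclic subspace of $F$ but the existence of a root integral of height 2 with $\L(F)\in\Span(L^4,L^2H,H^2)$, which is what produces the specific right-hand side $-A_4L^4+A_2L^2H+A_0H^2$ and hence the $y$-linear ansatz with the stated coefficients. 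None of these arguments appear in your proposal, and without them the case split (and hence the trig/hyperbolic/linear $x$-dependence) is unproven.

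Second, the reduction in your step 3 is not ``mechanical''. Solving the six coefficient equations recursively gives $a_0,a_1,a_2$ and expresses $a_4$ through $a_3$, but $a_3$ cannot be eliminated directly; brute force yields only the third-order equation \eqref{eq:3_order}, not the first-order pairs \eqref{eq:main} asserted in the theorem. The descent requires genuinely extra ideas: the observation that the coefficient $F_{4,0}$ of $p_z^4$ in conformal complex coordinates is holomorphic, forcing $e^{\i\mu x}G(x)$ in \eqref{eq:F_40} to be a constant $A_c$ (this is where $A_3,A_4$ and the closed forms of $a_3,a_4$ without $h_{xx}$ come from, and is equivalent to integrating \eqref{eq:3_order} with the factor \eqref{eq:mult_2}); then two further integrations with the factors \eqref{eq:mult_11} and \eqref{eq:mult_12}, the first found through the ansatz \eqref{eq:ansatz} and the second discovered by demanding that $\{F_+,F_-\}$ be polynomial in $H$ and $L$ — that Poisson-algebra computation is precisely where $A_5$ and $A_6$ enter as the constants in \eqref{eq:1_order_1} and \eqref{eq:1_order_2}. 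Your proposal treats these constants as routine integration constants of ``compatibility relations,'' but without the holomorphicity trick and the integrating factors you do not reach the stated systems at all, so the core of the theorem remains unestablished.
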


\begin{remark}
  Actually, any equation in each pair~\eqref{eq:main} is sufficient to describe the~function~$h(x)$.
  However, second equation allows to~eliminate~$h_x$ (see~\ref{ch:algebraic} for details), therefore the~system of two equations is an~implicit way to~integrate each of them.
  The~constants~$A_5$ and~$A_6$ play an~important role here: in~fact, the~choice of~$A_6$ for the~second equation is equivalent to the~choice of initial conditions for the~first equation, and vice versa.
\end{remark}

\subsection{Reduction to a 3rd order ODE}

Since the~metric~$g$ allows the~linear integral~$L$ it follows that in the~neighborhood of every point $X \in \M$ such that $L|_{T^*_X \M} \not\equiv 0$ there exist local coordinates~$(x,y)$ such that the~metric and the~linear integral have the~form
\begin{equation}
  g = \lambda (x) \left( dx^2 + dy^2 \right), \quad L=p_y,
\end{equation}
where~$\lambda (x)$ is a~real function \cite{Bolsinov_Matveev_Fomenko}.
It means that superintegrability condition can be expressed as a~condition on the~function~$\lambda (x).$
Namely, quartic integral has the~following form in our coordinates:
\begin{equation}
  F = \sum_{k=0}^4 \tilde{a}_k(x,y) p_x^{4-k} p_y^k,
\end{equation}
where~$\tilde{a}_k(x,y)$ are some unknown functions.
The~condition~$\{F,H\} = 0$ is equivalent to a~system of partial differential equations on the~functions~$\tilde{a}_k(x,y)$ and~$\lambda (x).$
Solving this system, we can obtain superintegrability condition as an~equation on~$\lambda (x)$, and an~explicit formula for the~quartic integral.

However this system of equations is extremely complicated, therefore it is useful to reduce our problem to a~simpler system of ordinary differential equations.
To do that, we need to eliminate one of the~variables.

Following~\cite{Matveev_Shevchishin}, let us consider a~linear operator
\begin{equation*}
  \L \colon F \mapsto \{ L, F \}
\end{equation*}
acting on the~space~$\J$ of complex quartic integrals in the~neighborhood of~$X$ (complexification is needed to guarantee that~$\L$ has eigenvalues).
$\L$ is well-defined: first, Poisson bracket of two integrals is an~integral due to the~Jacobi identity; secondly, $\deg \{L,F\} = \deg L + \deg F - 1 = \deg F.$
Note also that~$\J$ is a~finite-dimensional space.
Namely, $\dim \J \leq 15$~\cite{Kruglikov}.

It is readily seen that $L^4,\, L^2 H$ and~$H^2$ are eigenvectors of~$\L$ with an~eigenvalue of~$0$.
Two cases are possible:
\begin{enumerate}
\item
  There exists a~nonzero eigenvalue of~$\L$: $\mu \neq 0, \mu \in \C$.
\item
  The~spectrum of~$\L$ is~$\{0\}$.
\end{enumerate}

Let us consider the~basic case~$\mu \neq 0\,$; the~case~$\mu = 0$ will be considered later in Chapter~\ref{ch:mu_0}.

Assume there exists a~complex quartic integral
\begin{equation}
  F = \sum_{k=0}^4 \tilde{a}_k(x,y)\, p_x^{4-k} p_y^k
\end{equation}
(this time functions~$\tilde{a}_k(x,y)$ are complex-valued) such that $\{L, F\} = \d_y F = \mu F,$ i.e. $\d_y \tilde{a}_k(x,y) = \mu\, \tilde{a}_k(x,y),\, k = 0, \ldots, 4.$
It follows that $\tilde{a}_k(x,y) = e^{\mu y}\, a_k(x)$ for some functions~$a_k(x).$
Thus the~integral~$F$ has the~form
\begin{equation}
\label{eq:F}
  F = e^{\mu y} \sum_{k=0}^4 a_k(x)\, p_x^{4-k} p_y^k.
\end{equation}

For convenience in further calculations, let us rewrite the~metric as
\begin{equation}
  g = \frac{1}{h_x^2} (dx^2 + dy^2)
\end{equation}
for some function~$h(x).$
Then $H = \dfrac{h_x^2}{2} (p_x^2 + p_y^2)$, and the~condition~$\{F,H\} = 0$ reads
\begin{equation}
\label{eq:FH=0}
  \begin{aligned}
    e^{\mu y} h_x\, \left( a_0'(x) h_x - 4 a_0(x) h_{xx} \right) &\,\, p_x^5\\
    +\, e^{\mu y}\, h_x \left( \mu a_0(x) h_x + a_1'(x) h_x - 3 a_1(x) h_{xx} \right) &\,\, p_x^4 p_y\\
    +\, e^{\mu y}\, h_x \left( \mu a_1(x) h_x + a_2'(x) h_x - 4 a_0(x) h_{xx} - 2 a_2(x) h_{xx} \right) &\,\, p_x^3 p_y^2\\
    +\, e^{\mu y}\, h_x \left( \mu a_2(x) h_x + a_3'(x) h_x - 3 a_1(x) h_{xx} - a_3(x) h_{xx} \right) &\,\, p_x^2 p_y^3\\
    +\, e^{\mu y}\, h_x \left( \mu a_3(x) h_x + a_4'(x) h_x - 2 a_2(x) h_{xx} \right) &\,\, p_x p_y^4\\
    +\, e^{\mu y}\, h_x \left( \mu a_4(x) h_x - a_3(x) h_{xx} \right) &\,\, p_y^5 = 0.
  \end{aligned}
\end{equation}

Since the~monomials~$p_x^{5-k}p_y^k$ form a~basis in the~space of homogeneous polynomials of degree 5, every term in~\eqref{eq:FH=0} should vanish.
This gives us the~following system of 6 equations on the~functions~$h(x), a_0(x), \ldots, a_4(x)$:
\begin{subequations}
  \begin{align}
    \label{eq:sys_1}
    a_0'(x) h_x - 4 a_0(x) h_{xx} & = 0,\\
    \label{eq:sys_2}
    \mu a_0(x) h_x + a_1'(x) h_x - 3 a_1(x) h_{xx} & = 0,\\
    \label{eq:sys_3}
    \mu a_1(x) h_x + a_2'(x) h_x - 4 a_0(x) h_{xx} - 2 a_2(x) h_{xx} & = 0,\\
    \label{eq:sys_4}
    \mu a_2(x) h_x + a_3'(x) h_x - 3 a_1(x) h_{xx} - a_3(x) h_{xx} & = 0,\\
    \label{eq:sys_5}
    \mu a_3(x) h_x + a_4'(x) h_x - 2 a_2(x) h_{xx} & = 0,\\
    \label{eq:sys_6}
    \mu a_4(x) h_x - a_3(x) h_{xx} & = 0.
  \end{align}
\end{subequations}

Subsequently solving~\cref{eq:sys_1,eq:sys_2,eq:sys_3}, we get
\footnote{We choose such ``strange'' constants of integration to obtain explicit invariance of our equations under the~transformation~$\mu \to -\mu$, and to have ``correct'' limit~$\mu \to 0$.}
\begin{subequations}
  \begin{align}
    \label{eq:a0}
    a_0(x) & = \frac{A_0}{\mu} h_x^4,\\
    \label{eq:a1}
    a_1(x) & = (A_1 - A_0 h) h_x^3,\\
    \label{eq:a2}
    a_2(x) & =
    \left(
      \frac{1}{2} \mu A_0 h^2 + \frac{2A_0}{\mu} h_x^2 - \mu A_1 h + \frac{A_2}{\mu}
    \right) h_x^2,
  \end{align}
\end{subequations}
where~$A_0, A_1$ and~$A_2$ are complex constants of integration.
Also, we can eliminate~$a_4(x)$ using~\cref{eq:sys_6}:
\begin{equation}
  \label{eq:a4}
  a_4(x) = a_3(x) \cfrac{h_{xx}}{\mu\, h_x}.
\end{equation}

Unfortunately, there is no explicit way to eliminate~$a_3(x)$ after substituting these expressions for~$a_0(x),\, a_1(x),\, a_2(x)$ and~$a_4(x)$ into the~remaining equations~\cref{eq:sys_4,eq:sys_5}.
However we can use the~following trick.
Consider formal complex local coordinates
\begin{equation}
  z = x + \i y,\, \bz = x - \i y
\end{equation}
and the~corresponding local momenta
\begin{equation}
  p_z = \frac{1}{2} \left( p_x - \i p_y \right),\,
  \pbz = \frac{1}{2} \left( p_x + \i p_y \right).
\end{equation}
In the~new coordinates the~integral~$F$ reads
\begin{equation}
F = \sum_{k=0}^4 F_{4-k,k}(z, \bz)\, p_z^{4-k} \pbz^k.
\end{equation}
\begin{proposition}
  The~function~$F_{4,0}$ is holomorphic.
\end{proposition}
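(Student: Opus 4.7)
The plan is to examine the coefficient of~$p_z^5$ in the identity $\{F,H\}=0$ after rewriting everything in the~complex coordinates~$(z,\bz)$, and to observe that only one summand of the~Poisson bracket can attain this top $p_z$-degree.

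First I would rewrite the~Hamiltonian in the~new coordinates. Since the~metric $g=\frac{1}{h_x^2}(dx^2+dy^2)=\frac{1}{h_x^2}\,dz\,d\bz$ is conformally flat, the~dual metric satisfies $g^{zz}=g^{\bz\bz}=0$ and $g^{z\bz}=2h_x^2$, so
\begin{equation*}
  H = 2\,h_x^2\,p_z\,\pbz.
\end{equation*}
The~absence of $p_z^2$ and~$\pbz^2$ terms is the~geometric fact that drives the whole argument.

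Next I would expand
\begin{equation*}
  \{F,H\} = \d_z F\,\d_{p_z}H - \d_{p_z}F\,\d_z H + \d_{\bz} F\,\d_{\pbz}H - \d_{\pbz}F\,\d_{\bz}H
\end{equation*}
and track the~maximal $p_z$-degree in each summand. Note that $F$ has $p_z$-degree at~most~$4$, attained only by~$F_{4,0}\,p_z^4$. Using $\d_{p_z}H=2h_x^2\,\pbz$, $\d_{\pbz}H=2h_x^2\,p_z$, and $\d_z H,\,\d_{\bz}H\propto p_z\pbz$, a~direct count shows that three of the~four summands have $p_z$-degree at~most~$4$, while only the~third summand $\d_{\bz}F\cdot\d_{\pbz}H$ can reach $p_z^5$, with coefficient exactly $2\,h_x^2\,\d_{\bz}F_{4,0}$.

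Finally, since $\{F,H\}=0$ forces every monomial coefficient to vanish and $h_x\neq 0$ in a~neighborhood of~$X$ by non-degeneracy of~$g$, we conclude $\d_{\bz}F_{4,0}=0$, which is the~Cauchy--Riemann condition for holomorphy. The~only real content of the~argument is the~initial observation about the~conformally flat form of~$H$; after that the~bookkeeping of $p_z$-degrees is immediate, so I do not expect a~serious obstacle.
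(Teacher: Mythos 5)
Your proposal is correct and follows essentially the same route as the paper: write $H=2h_x^2 p_z\pbz$ in the complex coordinates, expand the Poisson bracket, and observe that the coefficient of $p_z^5$ in $\{F,H\}=0$ comes only from $\d_{\bz}F\,\d_{\pbz}H$, giving $2h_x^2\,\d_{\bz}F_{4,0}=0$ and hence $\d_{\bz}F_{4,0}=0$. Your explicit degree bookkeeping and the remark that $h_x\neq 0$ are just slightly more detailed versions of what the paper leaves implicit.
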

\begin{proof}
  In the~new coordinates we have~$H = 2\, h_x^2(z, \bz)\, p_z \pbz$.
  Poisson bracket~$\{ F, H \}$ can be expanded as
  \begin{equation*}
    \begin{gathered}
      \{ F, H \} = \left(
        \d_zF \d_{p_z}H - \d_{p_z}F \d_zH
      \right) + \left(
        \d_{\bz} F \d_{\pbz} H - \d_{\pbz} F \d_{\bz} H
      \right) = 0.
    \end{gathered}
  \end{equation*}
  Since the~bracket equals zero, the~coefficient of~$p_z^5$ should vanish.
  Clearly, this coefficient comes only from the~term~$\d_{\bz} F \d_{\pbz} H$.
  It is equal to~$\d_{\bz}F_{4,0}\, p_z^4 \cdot 2\,h_x^2\, p_z$, hence $\d_{\bz}F_{4,0} = 0$.
\end{proof}

Calculating~$F_{4,0}$ explicitly, we have
\begin{equation}
  \label{eq:F_40}
  \begin{gathered}
    F_{4,0} = e^{\mu y}
    \bigg[
      \left(
        \frac{h_{xx}}{\mu h_x} - \i
      \right) a_3(x) - \left(
        \frac{1}{2}\, \mu A_0 h^2 + \frac{A_0}{\mu} h_x^2
        - \mu A_1 h + \frac{A_2}{\mu} 
      \right) h_x^2\\
      +\, \i (A_1 - A_0 h) h_x^3
    \bigg]
    \equiv e^{\mu y}\,G(x).
  \end{gathered}
\end{equation}
Since~$F_{4,0} = e^{\mu y} G(x) = e^{-\i \mu z} \cdot e^{\i \mu x} G(x)$ is holomorphic, it follows that~$e^{\i \mu x} G(x)$ has to be constant:
\begin{equation}
  \label{eq:holom}
  e^{\i \mu x} G(x) = A_c,\quad G(x) = e^{-\i \mu x} A_c.
\end{equation}
Solving~\eqref{eq:F_40} for $a_3(x)$ and taking into account~\eqref{eq:holom}, we obtain
\begin{equation}
  \label{eq:a3}
  \begin{gathered}
    a_3(x) = \frac{1}{\dfrac{h_{xx}}{\mu h_x} - \i}\\ \times \Bigg[
      \frac{A_0}{\mu} h_x^4 + \i (A_0 h - A_1) h_x^3
      + \left(
        \frac{1}{2} \mu A_0 h^2 - \mu A_1 h + \frac{A_2}{\mu}
      \right) h_x^2 + e^{-\i \mu x} A_c
    \Bigg].
  \end{gathered}
\end{equation}

Finally, substituting the~obtained expressions for~$a_1(x), a_2(x), a_3(x)$ and~$a_4(x)$ into any equation~\cref{eq:sys_4,eq:sys_5}, we get an ordinary differential equation of order 3 on the~function~$h(x)$:
\begin{equation}
  \label{eq:3_order}
  \begin{gathered}
    \EEE(x,\, h,\, h_x,\, h_{xx},\, h_{xxx};\, \mu,\, A_0,\, A_1,\, A_2,\, A_c)\\
    \equiv \left[
      \frac{A_0}{\mu} h_x^4 + \i (A_0 h - A_1) h_x^3 +
      \left(
        \frac{1}{2} \mu A_0 h^2 - \mu A_1 h + \frac{A_2}{\mu}
      \right) h_x^2 + e^{-\i \mu x} A_c      
    \right]
    \cdot\, (h_{xxx} + \mu^2 h_x)\\
    -\, 3 \left[
      2\, \frac{A_0}{\mu} h_x^2 - \i (A_0 h - A_1) h_x +
      \left(
        \frac{1}{2} \mu A_0 h^2 - \mu A_1 h + \frac{A_2}{\mu}
      \right)      
    \right]
    \cdot \left(
      h_{xx} - \i \mu h_x
    \right) h_x h_{xx}\\
    - \frac{3}{\mu} (A_0 h - A_1) h_x h_{xx}^3\,= 0.
  \end{gathered}
\end{equation}

\subsection{Equation of order 2 in real and general cases}

Let us remark that in~\cite{Matveev_Shevchishin}, the~``$1+3$'' case was reduced to an~ordinary differential equation of order~3.
The~equation was integrated with an~integrating factor, and as a~result a~1st order ODE was obtained.
Therefore it is natural to~try to~generalize this result, and to~find an~integrating factor for the~``$1+4$'' case as~well.
Since equation~\eqref{eq:3_order} looks rather complicated, we are going to~obtain the~answer in a~special case.
Then we will be able to ``guess'' the~integrating factor in the~general case.

Namely, let us assume that~$\mu$ and~$A_k$ are real (we will show later that all real solutions of~\eqref{eq:3_order} can be obtained with real~$A_k$ and real or imaginary~$\mu$).
In this case, holomorphicity condition reads
\begin{equation}
  \label{eq:holomreal}
  e^{\i \mu x} G(x) = -\frac{A_4}{\mu} + \i A_3,
\end{equation}
where~$A_3,\, A_4$ are real.
This means that we have one more constant, thus we can expect to~obtain a~2nd order ODE on~$h(x)$.

Indeed, solving the~imaginary part of~\eqref{eq:holomreal} for~$a_3(x)$ and using~\eqref{eq:F_40}, we get
\begin{equation}
  \label{eq:a3real}
  a_3(x) = (A_1 - A_0 h) h_x^3 - \left(
    A_3 \cos (\mu x) + \frac{A_4}{\mu} \sin (\mu x)
  \right).
\end{equation}

Now, substituting this expression in the~real part of~\eqref{eq:holomreal}, we obtain a 2nd order~ODE on the~function~$h(x)$:
\begin{equation}
  \label{eq:2_order}
  \begin{gathered}
    \EE(x,\, h,\, h_x,\, h_{xx};\, \mu,\, A_0,\, A_1,\, A_2,\, A_3,\, A_4)\\
    \equiv \left[
      \left(
        A_0 h - A_1
      \right) h_x^3 +
      \left(
        A_3 \cos (\mu x) + \frac{A_4}{\mu} \sin(\mu x)
      \right)
    \right] h_{xx}\\
    + A_0 h_x^5
    + \left(
      \frac{1}{2}\, \mu^2 A_0 h^2 - \mu^2 A_1 h + A_2
    \right) h_x^3\\
    + \mu \left(
      A_3 \sin(\mu x) - \frac{A_4}{\mu} \cos(\mu x)
    \right) h_x = 0.
  \end{gathered}
\end{equation}

We shall now show that the~3rd order equation~\eqref{eq:3_order} follows from the~2nd order equation~\eqref{eq:2_order}.
Note that~\eqref{eq:2_order} depends on~$A_3$ and $A_4$, while~\eqref{eq:3_order} depends on~$A_c$.
Besides, $A_c = -\cfrac{A_4}{\mu} + \i A_3$.
Let us replace~$A_3$ and~$A_4$ by
\begin{equation}
  A_3 = \frac{1}{2 \i} (A_c - \bar{A}_c),\quad A_4 = -\frac{\mu}{2} (A_c + \bar{A}_c)
\end{equation}
in~\eqref{eq:2_order} (the~bar denotes complex conjugation here).
Solving the~resulting equation for~$\bar{A}_c$ and differentiating both sides, we obtain a~3rd order ODE proportional to~\eqref{eq:3_order}.
The~coefficient of proportionality equals
\begin{equation}
  \label{eq:mult_2}
  M^{(2)} = \frac{e^{-\i \mu x}\, h_x}{(h_{xx} - \i \mu\, h_x)^2}.
\end{equation}

It can easily be checked that this calculation can be ``reversed'' in the~general case as~well.
\begin{proposition}
  Order of equation~\eqref{eq:3_order} can be reduced by integration with integrating factor~\eqref{eq:mult_2}, constant of integration~$\bar{A}_c$ and a~linear change of constants
  \begin{equation}
    A_c = -\frac{A_4}{\mu} + \i A_3,\quad \bar{A}_c = -\frac{A_4}{\mu} - \i A_3.
  \end{equation}
  The resulting 2nd order equation is~\eqref{eq:2_order}.
\end{proposition}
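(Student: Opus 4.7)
The plan is to invert the computation performed for the real case just before the statement: instead of starting from \eqref{eq:2_order} and differentiating to arrive at \eqref{eq:3_order}, I would start from \eqref{eq:3_order}, multiply it by $M^{(2)}$, and recognize the result as an exact $x$-derivative, producing \eqref{eq:2_order} upon integration.

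Concretely, in \eqref{eq:2_order} I apply the linear change of constants $A_3 = \tfrac{1}{2\i}(A_c - \bar{A}_c)$, $A_4 = -\tfrac{\mu}{2}(A_c + \bar{A}_c)$. Under this substitution the two trigonometric combinations appearing in \eqref{eq:2_order} become
\[
A_3 \cos(\mu x) + \tfrac{A_4}{\mu}\sin(\mu x) = -\tfrac{\i}{2}\bigl(A_c e^{-\i\mu x} - \bar{A}_c e^{\i\mu x}\bigr),\qquad A_3 \sin(\mu x) - \tfrac{A_4}{\mu}\cos(\mu x) = \tfrac{1}{2}\bigl(A_c e^{-\i\mu x} + \bar{A}_c e^{\i\mu x}\bigr),
\]
so that the modified equation is linear in $\bar{A}_c$. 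Solving for $\bar{A}_c$ yields an explicit expression $\bar{A}_c = \Psi(x, h, h_x, h_{xx};\, \mu, A_0, A_1, A_2, A_c)$. Since $\bar{A}_c$ is constant, $\frac{d\Psi}{dx} = 0$ is a 3rd order ODE on $h(x)$, linear in $h_{xxx}$. I would then verify by a direct calculation that this ODE coincides with $M^{(2)} \cdot \EEE = 0$. The proposition follows in the desired direction: given any solution $h$ of $\EEE = 0$ we have $\frac{d\Psi}{dx} = M^{(2)}\,\EEE = 0$, hence $\Psi$ is a constant of integration, which is renamed $\bar{A}_c$, and the resulting relation is \eqref{eq:2_order}.

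The main obstacle is the algebraic identity $M^{(2)}\cdot\EEE = \frac{d\Psi}{dx}$, but in fact no new ideas are required beyond what is already in the excerpt. The shape of the integrating factor is implicit in the holomorphicity condition \eqref{eq:F_40}: the factor $\bigl(h_{xx}/(\mu h_x) - \i\bigr)$ standing in front of $a_3(x)$ there is exactly what produces the denominator $(h_{xx} - \i\mu h_x)^2$ of $M^{(2)}$ after solving for $a_3$ and differentiating once. Matching the leading $h_{xxx}$ term determines $\Psi$ uniquely up to expressions independent of $h_{xx}$, and the remaining verification reduces to matching coefficients of the powers $h_x^k$, $k=0,\ldots,5$. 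This is precisely the computation already carried out in the real case, read in reverse; the complex setting brings no additional difficulty, as all manipulations are polynomial or rational in $A_c, \bar{A}_c$ and in the exponentials $e^{\pm\i\mu x}$.
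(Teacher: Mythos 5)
Your proposal is correct and takes essentially the same route as the paper: the paper likewise substitutes $A_3=\tfrac{1}{2\i}(A_c-\bar{A}_c)$, $A_4=-\tfrac{\mu}{2}(A_c+\bar{A}_c)$ into \eqref{eq:2_order}, solves the resulting equation (linear in $\bar{A}_c$) for $\bar{A}_c$, differentiates to obtain a 3rd order ODE proportional to \eqref{eq:3_order} with proportionality factor \eqref{eq:mult_2}, and then observes that this calculation reverses in the general case with $\bar{A}_c$ promoted to an independent constant of integration. Your explicit exponential rewriting of the trigonometric combinations and the reversal argument match the paper's verification step.
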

\begin{remark}
  This time~$\bar{A}_c$ is a~new constant of integration independent of~$A_c\,;$ correspondingly, $A_3$ and~$A_4$ are arbitrary complex numbers like~$A_0,\, A_1,\, A_2$.
\end{remark}

Resolving~$h_{xx}$ from~\eqref{eq:2_order} and substituting it in~\eqref{eq:a3} and~\eqref{eq:a4}, we have
\begin{subequations}
  \begin{align}
    \label{eq:a3final}
    & a_3(x) = (A_1 - A_0 h) h_x^3 - \left(
      A_3 \cos (\mu x) + \frac{A_4}{\mu} \sin (\mu x)
    \right),\\
    \label{eq:a4final}
    & \begin{gathered}
      a_4(x) = \left(
        \frac{\mu}{2} (A_0 h - 2 A_1) h + \frac{A_0}{\mu} h_x^2 + \frac{A_2}{\mu}
      \right) h_x^2
      + \left(
        A_3 \sin (\mu x) - \frac{A_4}{\mu} \cos (\mu x)
      \right).
    \end{gathered}
  \end{align}
\end{subequations}
Therefore, we can now write the~integral~$F$ explicitly:
\begin{equation}
  \label{eq:F_explicit}
  \begin{gathered}
    F = e^{\mu y} \bigg\{
    \frac{A_0}{\mu}\, h_x^4\, p_x^4
    - \left(
      A_0 h - A_1
    \right) h_x^3\, p_x^3 p_y\\
    + \left[
      2\, \frac{A_0}{\mu}\, h_x^4
      + \left(
        \frac{\mu}{2} \left(
          A_0 h - 2 A_1
        \right) h + \frac{A_2}{\mu}
      \right) h_x^2
    \right] p_x^2 p_y^2\\
    - \left[
      \left(
        A_0 h - A_1
      \right) h_x^3
      + \left(
        A_3 \cos(\mu x) + \frac{A_4}{\mu} \sin(\mu x)
      \right)
    \right] p_x p_y^3\\
    + \bigg[
      \frac{A_0}{\mu}\, h_x^4
      + \left(
        \frac{\mu}{2} \left(
          A_0 h - 2 A_1
        \right) h + \frac{A_2}{\mu}
      \right) h_x^2\\
      + \left(
        A_3 \sin(\mu x) - \frac{A_4}{\mu} \cos(\mu x)
      \right)
    \bigg] p_y^4
    \bigg\}.
  \end{gathered}
\end{equation}

\subsection{Equation of order 1}

One can easily check the~following proposition:
\begin{proposition}
  Equation~\eqref{eq:2_order} reduces to the~equation
  \begin{equation}
    \label{eq:1_order_1}
    \begin{gathered}
      \Ef (x,\, h,\, h_x;\, \mu,\, A_0,\, A_1,\, A_2,\, A_3,\, A_4,\, A_5)\\
      \equiv 2 \left(
        A_0 h - A_1
      \right)
      \left(
        A_3 \cos(\mu x) + \frac{A_4}{\mu} \sin(\mu x)
      \right) h_x^3\\
      + \left[
        \left(
          \mu \left(
            A_0 h - 2 A_1
          \right) h + 2\, \frac{A_2}{\mu}
        \right)
        \left(
          A_3 \sin(\mu x) - \frac{A_4}{\mu} \cos(\mu x)
        \right) + A_5
      \right] h_x^2\\
      - \left(
        A_3 \cos(\mu x) + \frac{A_4}{\mu} \sin(\mu x)
      \right)^2 = 0
    \end{gathered}
  \end{equation}
  by integration with integrating factor
  \begin{equation}
    \label{eq:mult_11}
    M^{(1)}_1 = \frac{1}{h_x^3} \left( A_3 \cos(\mu x) + \frac{A_4}{\mu} \sin(\mu x) \right)
  \end{equation}
  and constant of integration~$A_5.$
\end{proposition}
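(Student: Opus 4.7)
The strategy is to exhibit an explicit first integral $\Phi(x,h,h_x)$ of the 2nd order ODE $\EE=0$ and verify that $M^{(1)}_1\cdot\EE$ is a constant multiple of $\tfrac{d\Phi}{dx}$. The form of $\Phi$ is dictated by the way $A_5$ enters the conclusion: in $\Ef$ the constant $A_5$ appears only through the summand $A_5 h_x^2$, so dividing $\Ef$ by $h_x^2$ and dropping that summand yields the natural candidate
\begin{equation*}
  \Phi := 2(A_0 h - A_1)\,P(x)\,h_x + \left(\mu(A_0 h - 2A_1)h + \tfrac{2A_2}{\mu}\right) Q(x) - \frac{P(x)^2}{h_x^2},
\end{equation*}
where I abbreviate $P(x):=A_3\cos(\mu x)+\tfrac{A_4}{\mu}\sin(\mu x)$ and $Q(x):=A_3\sin(\mu x)-\tfrac{A_4}{\mu}\cos(\mu x)$, so that $P' = -\mu Q$ and $Q' = \mu P$.

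I would then differentiate $\Phi$ in $x$ using the product rule together with $P'=-\mu Q$, $Q'=\mu P$, and expand $M^{(1)}_1\cdot\EE = (P/h_x^3)\,\EE$ directly; the proposition reduces to the identity $\tfrac{d\Phi}{dx} = 2\,M^{(1)}_1\cdot\EE$. Both sides split naturally into $h_{xx}$-linear and $h_{xx}$-free blocks, and the match between them follows from two observations: first, $\mu\left(\mu(A_0 h - 2A_1)h + \tfrac{2A_2}{\mu}\right) = 2\left(\tfrac12\mu^2 A_0 h^2 - \mu^2 A_1 h + A_2\right)$, which identifies the polynomial $h$-coefficients on the two sides; second, the pair of $\mu(A_0 h - A_1)\,Q\,h_x$ terms arising from differentiating $(A_0 h - A_1)P\,h_x$ (via $P'=-\mu Q$) and from differentiating $\left(\mu(A_0 h - 2A_1)h + \tfrac{2A_2}{\mu}\right)Q$ (via the $h$-derivative) enter with opposite signs and cancel. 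The $h_{xx}$-linear contributions match directly: differentiating $h_x$ in the first summand reproduces the $(A_0 h - A_1)P\,h_{xx}$ term in $(P/h_x^3)\,\EE$, and differentiating $-P^2/h_x^2$ reproduces $P^2 h_{xx}/h_x^3$ (up to the factor $2$).

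Once this identity is established, the proposition is immediate: along any solution of $\EE=0$ the derivative $\tfrac{d\Phi}{dx}$ vanishes, so $\Phi$ equals a constant; calling that constant $-A_5$ and multiplying through by $h_x^2$ recovers $\Ef=0$, with $A_5$ playing the role of the constant of integration. The only obstacle is computational bookkeeping, and given the symmetric action of $x$-differentiation on the pair $(P,Q)$ and the targeted form of $\Phi$, the verification is routine and systematic.
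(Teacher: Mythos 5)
Your proposal is correct and takes essentially the same route as the paper, whose proof is the ``one can easily check'' direct verification: multiplying \eqref{eq:2_order} by $M^{(1)}_1$ gives an exact $x$-derivative whose primitive, set equal to a constant and multiplied by $h_x^2$, is exactly \eqref{eq:1_order_1}; the factor $2$ in your identity $\tfrac{d\Phi}{dx}=2M^{(1)}_1\EE$ is a harmless normalization absorbed into the constant of integration $A_5$. The paper's additional ansatz $\EE=m(x)\left(\tfrac{d^2}{dx^2}+\mu^2\right)P(x)$ is only motivation for guessing the first integral, not part of the proof, so your explicit $\Phi$ plus the cancellation bookkeeping (which checks out term by term) is all that is needed.
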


Here is how one can ``guess'' the~equation~\eqref{eq:1_order_1} and the~integrating factor~\eqref{eq:mult_11}:
let us try to~express the~left-hand side of~\eqref{eq:2_order} in the~form
\begin{equation}
  \label{eq:ansatz}
  \EE = m(x) \left(
    \frac{d^2}{dx^2} + \mu^2
  \right) P(x),
\end{equation}
where~$m(x),\, P(x)$ are some functions (following the~case~``$1+3$''~\cite{Matveev_Shevchishin}).
Since~$A_k$ are independent constants, we can set all but one of them to be zeros.
This gives us the~following equations:
\begin{align}
  & m(x) = h_x^3,\\
  & P(x) = A_0 \mu h^2 - 2 A_1 h + 2 \frac{A_2}{\mu} + \left(
    A_3 \cos(\mu x) + \frac{A_4}{\mu} \sin(\mu x)
  \right) f(x).
\end{align}
Here the~function~$f(x)$ satisfies the~equation
\begin{equation}
  \begin{gathered}
    \frac{1}{2} \left(
      A_3 \cos(\mu x) + \frac{A_4}{\mu} \sin(\mu x)
    \right) h_x^3 f_{xx} + \mu \left(
      A_3 \sin(\mu x) - \frac{A_4}{\mu} \cos(\mu x)
    \right) h_x^3 f_x\\
    + \left(
      A_3 \cos(\mu x) + \frac{A_4}{\mu} \sin(\mu x)
    \right) h_{xx} + \mu \left(
      A_3 \sin(\mu x) - \frac{A_4}{\mu} \cos(\mu x)
    \right) h_x = 0,
  \end{gathered}
\end{equation}
which can be easily integrated:
\begin{equation}
  f_x = \frac{C}{\left(
      A_3 \cos(\mu x) + \frac{A_4}{\mu} \sin(\mu x)
    \right)^2} - \frac{1}{h_x^2}.
\end{equation}
Choosing $C = 0$, we finally obtain:
\begin{equation}
  \begin{gathered}
    P(x) = A_0 \mu h^2 - 2 A_1 h + 2 \frac{A_2}{\mu}
    - \left(
      A_3 \cos(\mu x) + \frac{A_4}{\mu} \sin(\mu x)
    \right) \int_0^x\frac{du}{h_x^2(u)}.
  \end{gathered}
\end{equation}
On the~other hand, from~\eqref{eq:ansatz} we get the~following:
\begin{equation}
  \begin{gathered}
    P(x) = B_1 \left(
      A_3 \cos(\mu x) + \frac{A_4}{\mu} \sin(\mu x)
    \right)
    +\, B_2 \left(
      -A_3 \sin(\mu x) + \frac{A_4}{\mu} \cos(\mu x)
    \right),
  \end{gathered}
\end{equation}
Differentiating the~resulting integro-differential equation, one can obtain equation~\eqref{eq:1_order_1} with~$A_5~=~\left( A_3^2 + \cfrac{A_4^2}{\mu^2} \right) B_1.$

\subsection{Poisson algebra of integrals}

Clearly, all the~equations on~$h(x)$ obtained so far are invariant under the~transformation~$\mu \to -\mu$.
Besides, the~functions~$a_k(x)$ transform as
$a_k(x) \to (-1)^k a_k(x).$
Therefore, if the~operator~$L$ has an~eigenvalue~$\mu$ and an~eigen integral
\begin{equation}
  \label{eq:Fplus}
  F_+ = e^{\mu y} \sum_{k=0}^4 a_k(x) p_x^{4-k} p_y^k,
\end{equation}
then it also has an~eigenvalue~$-\mu$ and eigen integral
\begin{equation}
  \label{eq:Fminus}
  F_- = e^{-\mu y} \sum_{k=0}^4 (-1)^k a_k(x) p_x^{4-k} p_y^k
\end{equation}
(actually, the~transformation~$\mu \to -\mu$ is equivalent to the~reverse of $y$ axis direction: $y \to -y,\, p_y \to -p_y$).

Let us now describe the~Poisson algebra generated by the~integrals~$H,\, L,\, F_+,\, F_-$.
Consider integrals~$G_7 = \{ F_+, F_- \}$ and~$G_8 = F_+ F_-$ of degrees 7 and 8.
\begin{proposition}
  \label{st:GLinvol}
  $\{ L, G_7 \} = \{ L, G_8 \} = 0.$
\end{proposition}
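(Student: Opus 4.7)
The proof should be a short, purely algebraic consequence of two fundamental properties of the Poisson bracket: the Leibniz (product) rule and the Jacobi identity. The only non-trivial input needed is the information already recorded in the excerpt, namely $\{L, F_+\} = \mu F_+$ and $\{L, F_-\} = -\mu F_-$. So my plan is to treat the two statements separately, each in a single line of calculation.

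For $G_8 = F_+ F_-$, I would invoke the Leibniz rule:
\begin{equation*}
  \{L, F_+ F_-\} = \{L, F_+\} F_- + F_+ \{L, F_-\} = \mu F_+ F_- - \mu F_+ F_- = 0.
\end{equation*}
For $G_7 = \{F_+, F_-\}$, I would apply the Jacobi identity in the form $\{L, \{F_+, F_-\}\} = \{\{L, F_+\}, F_-\} + \{F_+, \{L, F_-\}\}$ and substitute the eigenvalue relations:
\begin{equation*}
  \{L, \{F_+, F_-\}\} = \{\mu F_+, F_-\} + \{F_+, -\mu F_-\} = \mu \{F_+, F_-\} - \mu \{F_+, F_-\} = 0.
\end{equation*}
Both identities express the same underlying fact: $\mathcal{L} = \{L, \cdot\}$ is a derivation of the associative product and of the Poisson bracket, so its kernel is closed under both operations, and $F_+ F_-$ together with $\{F_+, F_-\}$ lie in the kernel because $F_\pm$ are eigenvectors with opposite eigenvalues.

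I do not expect any obstacle here; the statement is a formal consequence of the bracket axioms and is independent of the specific form of $F_\pm$, of the metric, or of the differential equations derived in the previous subsections. The proposition is being recorded because it is the starting point for a more substantive study of the Poisson algebra generated by $H, L, F_+, F_-, G_7, G_8$, where the genuine work (relations between higher-degree integrals) begins.
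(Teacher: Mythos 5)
Your proposal is correct and coincides with the paper's own proof: the paper likewise uses the Jacobi identity for $G_7 = \{F_+, F_-\}$ and the Leibniz rule for $G_8 = F_+ F_-$, together with the eigenvalue relations $\{L, F_\pm\} = \pm\mu F_\pm$, to get cancellation of the $\pm\mu$ terms. Nothing is missing.
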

\begin{proof}
  We have:
  \begin{equation*}
    \begin{aligned}
      & \{ L, G_7 \} = \{ L, \{ F_+, F_- \} \} = 
      \{ \{ L, F_+ \}, F_- \} + \{ F_+, \{ L, F_- \} \}\\
      & = \mu \{ F_+, F_- \} -\mu \{ F_+, F_- \} = 0.
    \end{aligned}
  \end{equation*}
  Similarly,
  \begin{equation*}
    \{ L, G_8 \} = \{ L, F_+ F_- \} = 
    \{ L, F_+ \} F_- + F_+ \{ L, F_- \}
    = \mu F_+ F_- -\mu F_+ F_- = 0.
  \end{equation*}
\end{proof}

Since a~2-dimensional Hamiltonian system cannot have more than 2 functionally independent integrals in~involution, it follows from~\ref{st:GLinvol} that the~integrals~$G_7,\, G_8$ functionally (hence polynomially) depend on~$H$ and~$L$.
Let us assume that~$G_7$~($G_8$) is a linear combination of all possible integrals of degree 7 (8) constructed from~$L$ and~$H$, i.~e.
\begin{align}
  \label{eq:G7}
  G_7 & = C_{H^3L} H^3 L + C_{H^2L^3} H^2 L^3 + C_{HL^5} H L^5 + C_{L^7} L^7,\\
  \label{eq:G8}
  G_8 & = C_{H^4} H^4 + C_{H^3L^2} H^3 L^2 + C_{H^2L^4} H^2 L^4 + C_{HL^6} H L^6 + C_{L^8} L^8,
\end{align}
where~$C_\bullet$ are some constants.

Substituting the~explicit expressions for~$G_7$ (see~\eqref{eq:Fplus}, \eqref{eq:Fminus}, \cref{eq:a0,eq:a1,eq:a2}, \cref{eq:a3final,eq:a4final}), $L$ and~$H$ in~\eqref{eq:G7}, we obtain some linear combinations of 4 monomials~$p_x^{6-2k}p_y^{2k+1},\, k = 0,\ldots,3$.
It gives us 4 equations on the~constants~$C_\bullet.$
Solving these equations and taking into account~\eqref{eq:1_order_1}, we have
\begin{subequations}
  \begin{align}
    \label{eq:CH3L}
    C_{H^3L} & = \frac{4}{\mu}\, A_0 A_2 - 2 \mu A_1^2,\\
    \label{eq:CHL5}
    C_{HL^5} & = -6 \mu A_5,\\
    \label{eq:CL7}
    C_{L^7} & = 8 \left(
      \mu A_3^2 + \frac{A_4^2}{\mu}\,
    \right),
  \end{align}
  \begin{equation}
    \label{eq:1_order_2_initial}
    \begin{gathered}
      \left(
        A_0 h - A_1
      \right)^2 h_x^3 +
      \Bigg[
        \frac{\mu}{2} \left(
          A_0 h - 2 A_1
        \right) h
        \left(
          \frac{\mu}{2} \left(
            A_0 h - 2 A_1
          \right) h + 2\, \frac{A_2}{\mu}
        \right)\\
        + 2\, \frac{A_0}{\mu} \left(
          A_3 \sin(\mu x) - \frac{A_4}{\mu} \cos(\mu x)
        \right) + \frac{4A_2^2 - \mu\, C_{H^2L^3}}{4 \mu^2}
      \Bigg] h_x\\
      - 2 \left(
        A_0 h - A_1
      \right)
      \left(
        A_3 \cos(\mu x) + \frac{A_4}{\mu} \sin(\mu x)
      \right) = 0.
    \end{gathered}
  \end{equation}
\end{subequations}

Hence we obtain a~new equation of order~1 on the~function~$h(x)$.
Note that so far we are just assuming that it may be true (given that the~integral~$G_7$ has the~form~\eqref{eq:G7}).
However, denoting~$A_6 = 4A_2^2 - \mu C_{H^2L^3}$, resolving~$A_6$ from~\eqref{eq:1_order_2_initial} and differentiating both sides, we get the~2nd order equation~\eqref{eq:2_order} multiplied by
\begin{equation}
  \label{eq:mult_12}
  M_2^{(1)} = \frac{1}{h_x^2} \left(
    A_0 h - A_1
  \right).
\end{equation}

Therefore we have proven the~following
\begin{proposition}
  \label{st:eq_1_order_2}
  2nd order equation~\eqref{eq:2_order} can be reduced to the~equation
  \begin{equation}
    \label{eq:1_order_2}
    \begin{gathered}
      \Es(x,\, h,\, h_x;\, \mu,\, A_0,\, A_1,\, A_2,\, A_3,\, A_4,\, A_6)\\
      \equiv \left(
        A_0 h - A_1
      \right)^2 h_x^3 +
      \Bigg[
      \frac{\mu}{2} \left(
        A_0 h - 2 A_1
      \right) h
      \left(
        \frac{\mu}{2} \left(
          A_0 h - 2 A_1
        \right) h + 2\, \frac{A_2}{\mu}
      \right)\\
      + 2\, \frac{A_0}{\mu} \left(
        A_3 \sin(\mu x) - \frac{A_4}{\mu} \cos(\mu x)
      \right) + \frac{A_6}{4 \mu^2}
      \Bigg] h_x\\
      - 2 \left(
        A_0 h - A_1
      \right)
      \left(
        A_3 \cos(\mu x) + \frac{A_4}{\mu} \sin(\mu x)
      \right) = 0.
    \end{gathered}
  \end{equation}
  by integration with integrating factor~\eqref{eq:mult_12} and constant of integration~$A_6$.
\end{proposition}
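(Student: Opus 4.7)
My approach is a direct verification of the integrating factor, exactly parallel to the verification of $M^{(2)}$ carried out in the previous subsection. Write the left-hand side of \eqref{eq:1_order_2} as $\Psi = \Psi_0 + \frac{A_6}{4\mu^2}\, h_x$, where $\Psi_0$ collects the terms not involving $A_6$. The plan is to compute the total $x$-derivative
\[
\frac{d\Psi}{dx} \;=\; \partial_x \Psi + h_x\, \partial_h \Psi + h_{xx}\, \partial_{h_x} \Psi,
\]
and to verify the polynomial identity
\[
\frac{d\Psi}{dx} \;=\; M_2^{(1)} \cdot \EE + \frac{h_{xx}}{h_x}\, \Psi,
\]
where $\EE$ denotes the left-hand side of \eqref{eq:2_order}. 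Restricted to $\Psi = 0$, this is precisely the statement that $M_2^{(1)}$ is an integrating factor for \eqref{eq:2_order} with integration constant $A_6$.

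The verification decomposes cleanly by powers of $h_{xx}$ and by the presence of $A_6$. The $A_6$-dependence on both sides must match: the LHS contributes $\frac{A_6}{4\mu^2}\, h_{xx}$ from $h_{xx}\,\partial_{h_x}(\frac{A_6}{4\mu^2} h_x)$, while on the RHS the only $A_6$-carrying piece is the $\frac{h_{xx}}{h_x}\Psi$ term, contributing $\frac{h_{xx}}{h_x}\cdot \frac{A_6}{4\mu^2}h_x = \frac{A_6}{4\mu^2}h_{xx}$ — a perfect match that both forces the correction term $\frac{h_{xx}}{h_x}\Psi$ and explains why $A_6$ does not appear in $\EE$. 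What remains is the $A_6$-free identity
\[
\partial_x \Psi_0 + h_x\, \partial_h \Psi_0 + h_{xx}\left(\partial_{h_x} \Psi_0 - \Psi_0/h_x\right) \;=\; M_2^{(1)} \cdot \EE.
\]
Matching coefficients of $h_{xx}$ amounts to checking $\partial_{h_x}\Psi_0 - \Psi_0/h_x = M_2^{(1)}\left[(A_0 h - A_1) h_x^3 + A_3 \cos(\mu x) + \tfrac{A_4}{\mu}\sin(\mu x)\right]$, and the $h_{xx}$-free remainder decouples further into the polynomial-in-$h$ block depending on $(A_0, A_1, A_2)$ and the trigonometric block depending on $(A_3, A_4)$, each to be checked separately.

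A more conceptual route, already outlined before the statement, derives \eqref{eq:1_order_2} directly from Poisson algebra: compute $G_7 = \{F_+, F_-\}$, invoke Proposition~\ref{st:GLinvol} together with involutivity with $H$ to conclude $G_7 \in \R[H, L]$, substitute the ansatz \eqref{eq:G7} and the explicit expressions \eqref{eq:Fplus}--\eqref{eq:Fminus} for $F_\pm$, and compare coefficients of the four independent monomials $p_x^{6-2k} p_y^{2k+1}$ for $k = 0, \ldots, 3$. Three identities fix $C_{H^3 L}, C_{HL^5}, C_{L^7}$ as in \eqref{eq:CH3L}--\eqref{eq:CL7}; the fourth, after elimination of $A_5$ via \eqref{eq:1_order_1}, becomes exactly \eqref{eq:1_order_2_initial} with $A_6 := 4 A_2^2 - \mu\, C_{H^2 L^3}$. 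The principal obstacle in either route is purely computational: the polynomials are lengthy, mix several $\mu$-weighted parameters with trigonometric coefficients in $\mu x$, and demand careful bookkeeping; no conceptual subtlety arises beyond the $L$-eigenspace decomposition and the Jacobi identity already in use.
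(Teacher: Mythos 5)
Your overall strategy---a direct verification that $M_2^{(1)}$ is an integrating factor, with the Poisson-algebra computation of $G_7$ kept only as motivation---is exactly the paper's: there, \eqref{eq:1_order_2_initial} is obtained from the ansatz \eqref{eq:G7} heuristically, and the proposition is then proved by resolving $A_6$ from the first-order relation and differentiating, which reproduces \eqref{eq:2_order} multiplied by \eqref{eq:mult_12}. However, the specific polynomial identity you propose to verify is false, so the verification as you set it up would fail. Writing $\Psi$ for the left-hand side of \eqref{eq:1_order_2}, the correct identity is
\begin{equation*}
  \frac{d\Psi}{dx} \;=\; 2\,h_x\,M_2^{(1)}\,\EE \;+\; \frac{h_{xx}}{h_x}\,\Psi,
  \qquad\text{equivalently}\qquad
  \frac{d}{dx}\left(\frac{\Psi}{h_x}\right) \;=\; 2\,M_2^{(1)}\,\EE,
\end{equation*}
not $\frac{d\Psi}{dx} = M_2^{(1)}\EE + \frac{h_{xx}}{h_x}\Psi$. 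Your version is already inconsistent in powers of $h_x$: the term $(A_0h-A_1)^2 h_x^3$ of $\Psi$ contributes $3(A_0h-A_1)^2 h_x^2 h_{xx}$ to $d\Psi/dx$, your correction term supplies $(A_0h-A_1)^2 h_x^2 h_{xx}$, so the remaining $2(A_0h-A_1)^2 h_x^2 h_{xx}$ must come from the $\EE$-piece, whereas $M_2^{(1)}\cdot(A_0h-A_1)h_x^3 h_{xx} = (A_0h-A_1)^2 h_x h_{xx}$ is short by the factor $2h_x$; the same factor is missing in your coefficient-of-$h_{xx}$ check. The conceptual point to make explicit is that the primitive produced by $M_2^{(1)}$ is $\Psi/h_x$, not $\Psi$ itself (the constant of integration enters \eqref{eq:1_order_2} multiplied by $h_x$, as the term $\tfrac{A_6}{4\mu^2}h_x$ shows): on a solution of $\EE=0$ the quantity $\Psi/h_x$ is constant in $x$, and the free constant $A_6$ absorbs that constant, giving $\Psi=0$. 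With this correction your first route coincides with the paper's proof, the factor $2$ being immaterial.

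Concerning your second, ``more conceptual'' route: this is how the paper discovers the equation, but as a proof it requires justifying that $G_7$, being in involution with $L$ and $H$, is a polynomial in $H$ and $L$ of the specific form \eqref{eq:G7}; the paper explicitly flags this as an assumption (``so far we are just assuming that it may be true'') and for that reason falls back on the differentiation argument. So the direct verification, with the corrected identity, should carry the weight of the proof, and the $G_7$ computation can only play the role it plays in the paper, namely identifying the candidate equation and the meaning of $A_6$ via $A_6 = 4A_2^2 - \mu\,C_{H^2L^3}$.
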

\begin{Corollary}
  The~integral~$G_7$ has the~form~\eqref{eq:G7} with coefficients~\cref{eq:CH3L,eq:CHL5,eq:CL7} and
  \begin{equation}
    C_{H^2L^3} = \frac{1}{\mu}(4 A_2^2 - A_6).
  \end{equation}
\end{Corollary}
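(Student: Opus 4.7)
The plan is to leverage the computations the author has set up in the paragraphs immediately preceding the corollary. First, I would recall from Proposition~\ref{st:GLinvol} that $\{L, G_7\} = 0$, while $G_7$ is an integral and hence also commutes with $H$. Since a two-dimensional Hamiltonian system admits at most two functionally independent integrals in involution and $H, L$ already exhaust this count, $G_7$ must be a polynomial in $H$ and $L$. A degree count ($\deg G_7 = 7$, $\deg H = 2$, $\deg L = 1$) leaves precisely the four monomials appearing in~\eqref{eq:G7}, so it suffices to pin down the constants $C_\bullet$.

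Next, I would perform the explicit matching. Substituting~\eqref{eq:Fplus}--\eqref{eq:Fminus} with coefficients~\eqref{eq:a0}--\eqref{eq:a2} and~\eqref{eq:a3final}--\eqref{eq:a4final} into $\{F_+, F_-\}$ gives, after cancellation of the exponentials $e^{\pm\mu y}$, a polynomial in $p_x, p_y$ whose coefficients depend only on $x$ (through $h(x)$ and the trigonometric terms). The right-hand side of~\eqref{eq:G7}, expanded via $H = \tfrac{h_x^2}{2}(p_x^2 + p_y^2)$ and $L = p_y$, is a polynomial of the same shape. Matching the four monomials $p_x^{6-2k} p_y^{2k+1}$ for $k = 0, 1, 2, 3$ yields a linear system in the four unknowns. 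Three of the equations reduce — using~\eqref{eq:1_order_1} to simplify where necessary — to the explicit values~\eqref{eq:CH3L}, \eqref{eq:CHL5}, \eqref{eq:CL7}; the fourth takes the form of the differential identity~\eqref{eq:1_order_2_initial}.

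Finally, I would close the argument by invoking Proposition~\ref{st:eq_1_order_2}: the function $h(x)$ satisfies~\eqref{eq:1_order_2} with arbitrary constant of integration $A_6$. Since~\eqref{eq:1_order_2_initial} and~\eqref{eq:1_order_2} have identical structure except in the coefficient of $h_x$, comparing these coefficients forces $A_6 = 4A_2^2 - \mu\, C_{H^2L^3}$, from which the stated formula $C_{H^2L^3} = \tfrac{1}{\mu}(4A_2^2 - A_6)$ is immediate. The main obstacle is entirely the bookkeeping in the second step — verifying that the fourth matching equation really collapses to~\eqref{eq:1_order_2_initial} requires careful use of~\eqref{eq:1_order_1} to eliminate terms that would otherwise spoil the degree structure; once this reduction is carried out, the corollary is a one-line identification of integration constants.
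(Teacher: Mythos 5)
Your proposal matches the paper's own argument: involution with $L$ and $H$ plus the degree count forces the ansatz~\eqref{eq:G7}, the coefficient matching of the monomials $p_x^{6-2k}p_y^{2k+1}$ modulo~\eqref{eq:1_order_1} yields \cref{eq:CH3L,eq:CHL5,eq:CL7} together with~\eqref{eq:1_order_2_initial}, and Proposition~\ref{st:eq_1_order_2} then identifies the remaining coefficient via $A_6 = 4A_2^2 - \mu\, C_{H^2L^3}$. This is essentially the same route the paper takes, so the proposal is correct as written.
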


Now, taking into account both~\eqref{eq:1_order_1} and~\eqref{eq:1_order_2}, one can perform similar calculations for~$G_8$, and show that this integral indeed has the~form~\eqref{eq:G8} with coefficients
\begin{subequations}
  \begin{align}
    C_{H4} & = - \frac{A_0^2}{\mu^2},\\
    C_{H^3L^2} & = A_1^2 - 2 \frac{A_0A_2}{\mu^2},\\
    C_{H^2L^4} & = - \frac{1}{4 \mu^2} (4A_2^2 - A_6),\\
    C_{L^8} & = -\left(
      A_3^2 + \frac{A_4^2}{\mu^2}
    \right).
  \end{align}
\end{subequations}
Thus we can describe the~Poisson algebra of integrals as follows.
\begin{theorem}
  Poisson algebra generated by~$H,\, L,\, F_+$ and~$F_-$ has the~only algebraic relation
  \begin{equation}
    \begin{gathered}
      F_+ F_- = -\frac{A_0^2}{\mu^2}\, H^4
      + \left(
        A_1^2 - 2\, \frac{A_0 A_2}{\mu^2}
      \right) H^3 L^2\\
      - \frac{1}{4 \mu^2} \left(
        4 A_2^2 - A_6
      \right) H^2 L^4
      + A_5\, H L^6
      - \left(
        A_3^2 + \frac{A_4^2}{\mu^2}
      \right) L^8.
    \end{gathered}
  \end{equation}
  Poisson structure is defined by~$\{ H, \bullet \} = 0$, $\{ L, F_{\pm} \} = \pm \mu F_{\pm}$ and
  \begin{equation}
    \begin{gathered}
      \{ F_+, F_- \} = \left(
        \frac{4}{\mu}\, A_0 A_2 - 2 \mu A_1^2
      \right) H^3 L
      + \frac{1}{\mu} \left(
        4 A_2^2 - A_6
      \right) H^2 L^3\\
      - 6\, \mu\, A_5\, H L^5
      + 8 \left(
        \mu A_3^2 + \frac{A_4^2}{\mu}
      \right) L^7.
    \end{gathered}
  \end{equation}
\end{theorem}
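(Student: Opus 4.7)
The plan is to verify the four bracket identities and then argue that they, together with the single algebraic relation, exhaust the structure of the Poisson algebra. Three of the four nontrivial brackets are essentially already in hand. The centrality $\{H,L\}=\{H,F_\pm\}=0$ is immediate since $L$ and $F_{\pm}$ are first integrals of the geodesic flow with Hamiltonian $H$. The eigenvalue relations $\{L, F_{\pm}\} = \pm\mu\, F_{\pm}$ hold by construction: $F_+$ was introduced as an eigen-integral of $\L$ with eigenvalue $\mu$, and $F_-$ is its image under the symmetry $\mu \to -\mu$, $y \to -y$. Finally, the explicit formula for $\{F_+, F_-\}$ is the content of the Corollary following Proposition~\ref{st:eq_1_order_2}.

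The main computational content is therefore the formula for $F_+ F_-$, which I would derive along the same lines as for $G_7 = \{F_+, F_-\}$. First, from Proposition~\ref{st:GLinvol} and the Leibniz rule $\{H, F_+ F_-\} = \{H, F_+\} F_- + F_+ \{H, F_-\} = 0$, the product $F_+F_-$ is a first integral commuting with both $H$ and $L$, hence a polynomial in them by functional independence; being homogeneous of degree $8$ in the momenta, it must take the form~\eqref{eq:G8} with five undetermined constants. Next, I substitute the explicit coefficients from~\eqref{eq:a0},~\eqref{eq:a1},~\eqref{eq:a2},~\eqref{eq:a3final},~\eqref{eq:a4final} into the definitions~\eqref{eq:Fplus}--\eqref{eq:Fminus}, multiply $F_+$ by $F_-$ (the exponentials $e^{\pm\mu y}$ cancel), and collect the result in the basis $p_x^{8-2k}p_y^{2k}$, $k=0,\dots,4$; the $(-1)^k$-symmetry of $F_-$ kills the odd-degree monomials automatically, leaving exactly five equations for the five unknowns $C_{H^4},\,C_{H^3L^2},\,C_{H^2L^4},\,C_{HL^6},\,C_{L^8}$. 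The resulting coefficients depend \emph{a priori} on $h,\,h_x$ and $\cos(\mu x),\,\sin(\mu x)$, but they collapse to the stated constants once one invokes the two first-order ODEs~\eqref{eq:1_order_1} and~\eqref{eq:1_order_2} as algebraic identities among $h_x^3,\,h_x,\,h$ and the trigonometric terms.

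For the claim that this is the \emph{only} algebraic relation, I would use the $\L$-grading. Under the derivation $\{L, \cdot\}$ the four generators $H, L, F_+, F_-$ carry weights $0, 0, \mu, -\mu$, so every polynomial in them decomposes into $\L$-weight sectors, and any polynomial identity must hold sector-by-sector. In the weight-$0$ sector the generators beyond $H, L$ are the products $F_+^m F_-^m$, each of which is reducible to $\R[H, L]$ by repeated use of~\eqref{eq:G8}; no further relation is possible because $H$ and $L$ are functionally (hence algebraically) independent. In the weight-$\pm n\mu$ sectors for $n \geq 1$, a relation would force $P(H, L)\, F_{\pm}^n \equiv 0$ for a nonzero polynomial $P$, contradicting the functional independence of $H, L, F_\pm$ assumed in the hypotheses of Theorem~\ref{th:local}. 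The principal obstacle lies in the middle step: the brute-force expansion of $F_+F_-$ produces five polynomial expressions in $h, h_x$, $\cos(\mu x)$, $\sin(\mu x)$ that must be simultaneously reduced modulo the two first-order ODEs, and careful bookkeeping of signs and factors of $\mu$ is required to reach the stated coefficients; a reliable sanity check is to specialize to the cases where all $A_i$ but one vanish, recovering $C_{H^4}$ and $C_{L^8}$ from the simplest subsystems.
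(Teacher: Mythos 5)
Your proposal follows essentially the same route as the paper: the product $G_8=F_+F_-$ is shown to Poisson-commute with $H$ and $L$ (Proposition~\ref{st:GLinvol} plus the Leibniz rule), is therefore posited as a degree-8 polynomial in $H,L$ of the form~\eqref{eq:G8}, and its five coefficients are fixed by expanding the explicit expressions~\cref{eq:a0,eq:a1,eq:a2,eq:a3final,eq:a4final} and reducing modulo the two first-order equations~\eqref{eq:1_order_1} and~\eqref{eq:1_order_2}; the bracket $\{F_+,F_-\}$ and the relations $\{L,F_\pm\}=\pm\mu F_\pm$, $\{H,\bullet\}=0$ are indeed already available from the preceding computations, exactly as the paper uses them. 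The one place where you go beyond the paper is the weight-grading argument for the claim that this is the \emph{only} algebraic relation: the paper simply asserts uniqueness, whereas your sector-by-sector reduction under the derivation $\{L,\cdot\}$ supplies a genuine justification. Note only the small imprecision there: the hypotheses of Theorem~\ref{th:local} assume functional independence of $H,L$ and the original integral $F=C_+F_++C_-F_-$, not of $F_\pm$ themselves, so to rule out $P(H,L)F_\pm^n\equiv 0$ you should argue that $F_+$ (equivalently $F_-$) cannot vanish identically without making $F$ functionally dependent on $H,L$ — a short extra remark, but the overall scheme is sound.
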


\begin{remark}
  Algebraic interpretation of the~calculations performed in this chapter is the~following: denote by~$\D = \D \left(h(x),\, \cos(\mu x) \right)$ the~differential field generated by the~functions~$h(x)$ and~$\cos(\mu x)$.
  In~particular, this field contains~$\sin(\mu x)$ and all the~derivatives of~$h(x)$.
  1st order equation~\eqref{eq:1_order_1} generates the~ideal~$\I(\Ef)$ (all the~algebraic and differential consequences of the~equation).
  Obviously, $\EE \in \I(\Ef)$ and~$\EEE \in \I(\Ef)$.
  In this notation, solving equations on constants~$C_\bullet$ and taking into account~\eqref{eq:1_order_1} means the~corresponding formal calculations in the~field~$\D \big/ \I(\Ef)$.
  We obtain a~new nontrivial equation~\eqref{eq:1_order_2_initial}, i.e.~$\Es \notin \I(\Ef)$.
  Next, statement~\ref{st:eq_1_order_2} reads as~$\d\, \cfrac{\Es}{h_x} = \EE \in \I(\Ef)$.
  Finally, calculations for the~integral~$G_8$ taking into account both~\eqref{eq:1_order_1} and~\eqref{eq:1_order_2} means the~corresponding formal calculations in~$\D \big/ \I(\Ef,\, \Es)$.
\end{remark}

\subsection{Case~$\mu = 0$}
\label{ch:mu_0}

Let us now consider the~case when spectrum of~$\L$ is~$\{0\}$.
Note that the~linear span $\Span (L^4,\, L^2H,\, H^2)$ is an~eigenspace of~$\L$ with eigenvalue of~0.
Since 0 is the~only eigenvalue and dimension of~$\J$ is not less than~4 (there exists an~integral functionally independent with~$L$ and~$H$), it follows that there exists a~root integral~$F$ of height~2 such that $\L (F) \in \Span (L^4,\, L^2H,\, H^2)$), i.e.
\begin{equation}
  \label{eq:gen_eigen}
  \{ L, F \} = -A_4 L^4 + A_2 L^2H +A_0 H^2,
\end{equation}
where~$A_0,\, A_2,\, A_4$ are some constants\footnote{We choose the~names of constants in correspondence with the~general case as~$\mu \to 0$.}.

The~integral~$F$ has the~form
\begin{equation}
  F = \sum_{k=0}^4 \tilde{a}_k(x,y)\, p_x^{4-k} p_y^k.
\end{equation}
Then the~condition~\eqref{eq:gen_eigen} can be rewritten as
\begin{equation}
  \begin{gathered}
    \sum_{k=0}^4 \d_y\tilde{a}_k(x,y)\, p_x^{4-k} p_y^k\\
    = A_0 h_x^4\, p_x^4 + (2 A_0 h_x^2 + A_2) h_x^2\, p_x^2 p_y^2 +
    (A_0 h_x^4 + A_2 h_x^2 - A_4)\, p_y^4.
  \end{gathered}
\end{equation}
It follows that
\begin{subequations}
  \begin{align}
    \tilde{a}_0(x,y) & = a_0(x) + y\, A_0 h_x^4,\\
    \tilde{a}_1(x,y) & = a_1(x),\\
    \tilde{a}_2(x,y) & = a_2(x) + y\, (2 A_0 h_x^2 + A_2) h_x^2,\\
    \tilde{a}_3(x,y) & = a_3(x),\\
    \tilde{a}_4(x,y) & = a_4(x) + y\, (A_0 h_x^4 + A_2 h_x^2 - A_4).
  \end{align}
\end{subequations}
The~condition~$\{ F, H \} = 0$ reads
\begin{equation}
  \begin{aligned}
    h_x (a_0'(x) h_x - 4 a_0(x) h_{xx}) &\,\, p_x^5\\
    + h_x (a_1'(x) h_x - 3 a_1(x) h_{xx} + A_0 h_x^5) &\,\, p_x^4 p_y\\
    + h_x (a_2'(x) h_x - 2 a_2(x) h_{xx} - 4 a_0(x) h_{xx}) &\,\, p_x^3 p_y^2\\
    + h_x (a_3'(x) h_x - a_3(x) h_{xx} - 3 a_1(x) h_{xx} + 2 A_0 h_x^5 + A_2 h_x^3) &\,\, p_x^2 p_y^3\\
    + h_x (a_4'(x) h_x - 2 a_2(x) h_{xx}) &\,\, p_x p_y^4\\
    - h_x (a_3(x) h_{xx} - A_0 h_x^5 - A_2 h_x^3 + A_4 h_x) &\,\, p_y^5 = 0,
  \end{aligned}
\end{equation}
Once again, it gives us a~system of 6 equations:
\begin{subequations}
  \begin{align}
    \label{eq:sys0_1}
     a_0'(x) h_x - 4 a_0(x) h_{xx} & = 0,\\
    \label{eq:sys0_2}
     a_1'(x) h_x - 3 a_1(x) h_{xx} + A_0 h_x^5 & = 0,\\
    \label{eq:sys0_3}
     a_2'(x) h_x - 2 a_2(x) h_{xx} - 4 a_0(x) h_{xx} & = 0,\\
    \label{eq:sys0_4}
     a_3'(x) h_x - a_3(x) h_{xx} - 3 a_1(x) h_{xx} + 2 A_0 h_x^5 + A_2 h_x^3 & = 0,\\
    \label{eq:sys0_5}
     a_4'(x) h_x - 2 a_2(x) h_{xx} & = 0,\\
    \label{eq:sys0_6}
     a_3(x) h_{xx} - A_0 h_x^5 - A_2 h_x^3 + A_4 h_x & = 0.
  \end{align}
\end{subequations}

Subsequently solving~\cref{eq:sys0_1,eq:sys0_2,eq:sys0_3} and~\eqref{eq:sys0_5}, we find
\begin{subequations}
  \begin{align}
    \label{eq:a00}
    a_0(x) & = B_0 h_x^4,\\
    \label{eq:a10}
    a_1(x) & = (A_1 - A_0 h) h_x^3,\\
    \label{eq:a20}
    a_2(x) & = (2 B_0 h_x^2 + B_2) h_x^2,\\
    \label{eq:a40}
    a_4(x) & = B_0 h_x^4 + B_2 h_x^2 - B_4,
  \end{align}
\end{subequations}
where~$B_0,\, A_1,\, B_2,\, B_4$ are constants of integration.
Resolving~$a_3(x)$ from~\eqref{eq:sys0_6} and substituting the~obtained expressions for~$a_k(x)$ in~\eqref{eq:sys0_4}, we finally have the~3rd order ODE on~$h(x)$:
\begin{equation}
  \label{eq:3_order0}
  \begin{gathered}
    (A_0 h_x^4 + A_2 h_x^2 - A_4) h_{xxx} -
    3 (A_0 h - A_1) h_x h_{xx}^3
    - 3 (2 A_0 h_x^2 + A_2) h_x h_{xx}^2 = 0.
  \end{gathered}
\end{equation}

One can easily check that taking the~limit~$\mu \to 0$ in the~general case equation~\eqref{eq:3_order}, one obtains the~equation~\eqref{eq:3_order0}.
This gives us a~simple way to~integrate~\eqref{eq:3_order0} with integration factors~\eqref{eq:mult_2}, \eqref{eq:mult_11} and~\eqref{eq:mult_12} by taking the~limit~$\mu \to 0$.
Namely, integrating~\eqref{eq:3_order0} with the~factor
\begin{equation}
  M_0^{(2)} = \frac{h_x}{h_{xx}^2},
\end{equation}
and constant of integration~$A_3$, we have the~2nd order equation
\begin{equation}
  \label{eq:2_order0}
  \begin{gathered}
    \left(
      (A_0 h - A_1) h_x^3 + (A_3 + A_4 x)
    \right) h_{xx}
    + (A_0 h_x^4 + A_2 h_x^2 - A_4) h_x = 0,
  \end{gathered}
\end{equation}
which is the~limit of equation~\eqref{eq:2_order}.
Equation~\eqref{eq:2_order0} allows us to obtain the~final expression for~$a_3(x)$ without the~second derivative of~$h$:
combining~\eqref{eq:sys0_6} and~\eqref{eq:2_order0}, we find
\begin{equation}
  \label{eq:a30}
  a_3(x) = (A_1 - A_0 h) h_x^3 - (A_3 + A_4 x).
\end{equation}

Thus we have an~explicit formula for the~quartic integral:
\begin{equation}
  \label{eq:F_explicit0}
  \begin{gathered}
    F = -B_4 L^4 + B_2 L^2H +B_0 H^2
    +\, a_1(x)\, p_x^3 p_y + a_3(x)\, p_x p_y^3\\
    +\, y \left(
      A_0 h_x^4\, p_x^4 + (2 A_0 h_x^2 + A_2) h_x^2\, p_x^2 p_y^2 + (A_0 h_x^4 + A_2 h_x^2 - A_4) p_y^4
    \right).
  \end{gathered}
\end{equation}
Now we see that the~constants~$B_0,\, B_2,\, B_4$ are ``unnecessary'': in~\eqref{eq:F_explicit0} they correspond to the~term that is functionally dependent with~$L$ and~$H$.
This is why the~equations on~$h$ are independent of~$B_k$; we can set these constants to zero.
Note also that the~integral~\eqref{eq:F_explicit0} can be obtained from~$F_+$ and~$F_-$ as a~limiting case:
\begin{equation}
  F = \lim_{\mu \to 0} \frac{F_+ + F_-}{2}.
\end{equation}

Next, integrating~\eqref{eq:2_order0} with the~factor
\begin{equation}
  M_{01}^{(1)} = \frac{A_3 + A_4 x}{h_x^3}
\end{equation}
and constant of integration~$A_5$, we obtain the~1st order ODE
\begin{equation}
  \label{eq:1_order0_1}
  \begin{gathered}
    2 (A_0 h - A_1) (A_3 + A_4 x) h_x^3\\
    + (2 A_2 A_3 x - A_4 (A_0 h - 2 A_1) h + A_2 A_4 x^2 + A_5) h_x^2\\
    - (A_3 + A_4 x)^2 = 0,
  \end{gathered}
\end{equation}
which is the~limit of equation~\eqref{eq:1_order_1} with a~``fixed'' constant~$A_5 \to A_5 + 2\, \dfrac{A_2 A_4}{\mu^2}$.

Finally, integrating~\eqref{eq:2_order0} with the~factor
\begin{equation}
  M_{02}^{(1)} = \frac{A_0 h - A_1}{h_x^2}
\end{equation}
and constant of integration~$A_6$, we find the~second equation of order 1:
\begin{equation}
  \label{eq:1_order0_2}
  \begin{gathered}
    (A_0 h - A_1)^2 h_x^3\\
    + \left(
      A_2 (A_0 h - 2 A_1) h + 2 A_0 A_3 x + A_0 A_4 x^2 + A_6
    \right) h_x\\
    - 2 (A_0 h - A_1) (A_3 + A_4 x) = 0,
  \end{gathered}
\end{equation}
which is the~limit of equation~\eqref{eq:1_order_2} with a~``fixed'' constant~$A_6 \to 4 \mu^2 A_6 + 8 A_0 A_4$.

\subsection{Real solutions}
\label{ch:real}

So~far, we have assumed that~$\mu$ and~$A_k$ are complex, therefore solutions of the~system~\cref{eq:1_order_1,eq:1_order_2} are complex as well.
However, we are only interested in real solutions, so it is necessary to~find the~corresponding conditions on the~parameters.

Similarly to the~case~``$1+3$''~\cite{Matveev_Shevchishin}, it can be shown that if a~function~$h(x)$ satisfies two systems of the~form~\cref{eq:1_order_1,eq:1_order_2} with some complex parameters~$(\mu,\, A_k)$ and~$(\mu',\, A_k')$, then
\begin{equation}
  \begin{gathered}
  \mu' = \pm \mu,\\
  (A_0',\, A_1',\, A_2',\, A_3',\, A_4',\, A_5',\, A_6')
  = (\lambda A_0,\, \lambda A_1,\, \lambda A_2,\, \lambda A_3,\,
  \lambda A_4,\, \lambda^2 A_5,\, \lambda^2 A_6).
  \end{gathered}
\end{equation}

Now let~$h(x)$ be a~real-valued solution of~\cref{eq:1_order_1,eq:1_order_2} with some constants~$(\mu,\, A_k)$.
Then~$h(x)$ also satisfies the~complex conjugate system with parameters~$(\bar{\mu},\, \bar{A}_k)$.
It follows that $\bar{\mu} = \pm \mu$, so $\mu$ is real or purely imaginary.
Secondly, $\bar{A}_k = \lambda^{n(k)} A_k$, where~$n(k) = 1$ for~$k = 0,\ldots,4$ and~$n(k) = 2$ for~$k = 5,\, 6$.
Obviously, $\left| \lambda \right| = 1,\, \lambda = e^{\i \phi}$, and~$h(x)$ also satisfies the~system with parameters~$(\mu,\, e^{\frac{1}{2}\i n(k) \phi} A_k)$; here~$e^{\frac{1}{2}\i n(k) \phi} A_k$ are real.
Thus we can assume that~$A_k$ are real, and~$\mu$~is real or purely imaginary.

When~$\mu$ is real, all the~formulas obtained earlier (equations, expressions for~$F_\pm,\, G_7,\, G_8$) become real.
When~$\mu$ is imaginary, one can substitute~$\mu \to \i \mu$.
Trigonometric functions of imaginary argument become hyperbolic functions of real argument:
$\cos(\mu x) \to \cosh(\mu x),\, \cfrac{\sin(\mu x)}{\mu} \to \cfrac{\sinh(\mu x)}{\mu}$.
Instead of complex integrals~$F_+,\, F_-$ one should take real linear combinations
\begin{equation}
  F_1 = \frac{1}{2} (F_+ + F_-),\quad F_2 = \frac{1}{2 \i} (F_+ - F_-).
\end{equation}
Integral~$G_8$ turns out to be real:
\begin{equation}
  \begin{gathered}
    G_8 = F_+ F_- = (F_1 + \i F_2) (F_1 - \i F_2) = F_1^2 + F_2^2,
  \end{gathered}
\end{equation}
while~$G_7$ is imaginary:
\begin{equation}
  \begin{gathered}
    G_7 = \{ F_+, F_- \} = \{ F_1 + \i F_2, F_1 - \i F_2 \} =
    -2 \i \{F_1, F_2\},
  \end{gathered}
\end{equation}
so instead of~$G_7$ one can take~$\i G_7.$

These results can be summarized as follows.
We have 3 different cases: nonzero real~$\mu$ (trigonometric case), nonzero imaginary~$\mu$ (hyperbolic case) and special case~$\mu = 0$ (linear case).
Considering these cases, we obtain formulas~$a,\, b$ and~$c$ from Theorem~\ref{th:local} respectively.
This concludes the~proof. 

\begin{remark}
  \label{rem:A0}
  It is easy to~show that in the~case~$A_0 = 0$ integrals~$F_+$ and~$F_-$ are proportional to~$p_y = L$, i.e. they are products of linear and cubic integrals.
  On the~other hand, if we set $A_0 = 0$ in~\eqref{eq:main}, then the~second equation coincides with the~equation obtained in~\cite{Matveev_Shevchishin} for the~case~``$1+3$'' (up to the~constant names).
\end{remark}

\section{Further study of the~metric equations}
\label{ch:further}

\subsection{Symmetries and normal forms of equations}

Local description of ``$1+4$'' superintegrable metrics obtained in the~previous chapter is rather redundant.
Non-unique choice of several objects (local coordinate system, function~$h(x)$, quartic integral~$F$) results in a~large symmetry group.
It allows to~simplify the~equations describing the~metric.
Consider the~following symmetries.

\begin{enumerate}
\item {\bf Coordinate changes}

  Local coordinate system such that
  \begin{equation}
    \label{eq:g}
    g = \frac{1}{h_x^2} (dx^2 + dy^2)
  \end{equation}
  is non-unique.
  Clearly, the~following 5 transformations leave it unchanged:
  \begin{gather*}
    x \to x + a;\\
    y \to y + a;\\
    x \to -x;\\
    y \to -y;\\
    x \to \lambda x,\, y \to \lambda y.
  \end{gather*}

\item {\bf ``Gauge'' symmetry}

  The~metric~$g$ depends on the~derivative of~$h$, hence this function can be translated as
  \begin{equation*}
    h \to h + a
  \end{equation*}
  without any change in the~metric.

\item {\bf Homogeneity}

  In Chapter~\ref{ch:real} we already noted that the~metric equations are invariant under multiplication of~$A_k$ by an~arbitrary nonzero~$\lambda$ to the~necessary power.
  This is because quartic integral~$F$ is defined up to an~arbitrary nonzero constant~$\lambda$:
  \begin{equation*}
    F \to \lambda F.
  \end{equation*}

\item {\bf Metric dilations}

  Transformation
  \begin{equation*}
    h \to \lambda h,
  \end{equation*}
  changes the~metric.
  Nevertheless, it can be considered a~symmetry because it simply rescales the~problem, hence it maps solutions to solutions and does not affect superintegrability.
\end{enumerate}

This symmetry group acts naturally on the~space of parameters~$(\mu,\, A_0,\, A_1,\ldots, A_6)$.
Orbits of this action are sets of parameters which correspond to the~same metrics up to dilations.
Therefore it is sufficient to~consider only the~``simplest'' representative of each orbit.
Let us call the~equations corresponding to these representatives {\it normal forms}.

Now let us describe normal forms for each of the~3 cases:

\begin{enumerate}
\item {\bf Trigonometric case}

  Rescaling~$x$ and changing its sign if necessary we can set~$\mu = 1$.
  Next, translating~$x$ we can shift the~phase inside trigonometric functions so that~$A_4 = 0$.
  Homogeneity allows us to~set~$A_3 = 1$.
  Besides that, gauge symmetry~$h \to h + a$ makes it possible to~set~$A_1 = 0$ (here we assume that~$A_0 \neq 0$; see~\ref{rem:A0}).
  Finally, it is easy to~check that dilating the~metric we can obtain~$A_0 = \pm 1$ depending on the~initial sign of~$A_0$.
  It follows that in this case we have two normal forms:
  \begin{equation}
    \label{eq:trig_normal}
    \begin{cases}
      \begin{gathered}
        \pm\, 2 \cos(x)\, h h_x^3
        + \left(
          (\pm h^2 + 2 A_2) \sin(x) + A_5
        \right) h_x^2\,
        - \cos^2(x) = 0,
      \end{gathered}\\
      \begin{gathered}
        h^2 h_x^3 + \left[
          \frac{h^4}{4} \pm A_2 h^2 \pm 2 \sin(x) + \frac{A_6}{4}
        \right] h_x \mp 2 h \cos(x) = 0,
      \end{gathered}
    \end{cases}
  \end{equation}
  depending on 3 parameters~$A_2,\, A_5$ and~$A_6$.

\item {\bf Hyperbolic case}

  The~main difference from the~trigonometric case is that we cannot set~$A_4$ to zero by translation of~$x$.
  Depending on the~ratio of~$A_3$ and~$A_4$ linear combination~$A_3 \cosh(x) + A_4 \sinh(x)$ can be transformed to one of three functions:~$A_3 \cosh(x),\, A_4 \sinh(x)$ or~$A_3\, e^x$.
  Therefore, we have 6 normal forms:
  \begin{equation}
    \label{eq:hyper_normal}
    \begin{aligned}
      & \begin{cases}
        \begin{gathered}
          \pm 2 \cosh(x) h h_x^3 - \left(
            (\pm h^2 - 2 A_2) \sinh(x) - A_5
          \right) h_x^2 - \cosh^2(x) = 0,
        \end{gathered}\\
        \begin{gathered}
          h^2 h_x^3 - \left(
            \frac{h^4}{4} \mp A_2 h^2 \mp 2 \sinh(x) + \frac{A_6}{4}
          \right) h_x \mp 2 \cosh(x) h = 0,
        \end{gathered}
      \end{cases}\\
      & \begin{cases}
        \begin{gathered}
          \pm 2 \sinh(x) h h_x^3 - \left(
            (\pm h^2 - 2 A_2) \cosh(x) - A_5
          \right) h_x^2 - \sinh^2(x) = 0,
        \end{gathered}\\
        \begin{gathered}
          h^2 h_x^3 - \left(
            \frac{h^4}{4} \mp A_2 h^2 \mp 2 \cosh(x) + \frac{A_6}{4}
          \right) h_x \mp 2 \sinh(x) h = 0,
        \end{gathered}
      \end{cases}\\
      & \begin{cases}
        \begin{gathered}
          \pm 2\, e^x h h_x^3 - \left(
            (\pm h^2 - 2 A_2) e^x - A_5
          \right) h_x^2 - e^{2x} = 0,
        \end{gathered}\\
        \begin{gathered}
          h^2 h_x^3 - \left(
            \frac{h^4}{4} \mp A_2 h^2 \mp 2 e^x + \frac{A_6}{4}
          \right) h_x \mp 2 e^x h = 0
        \end{gathered}
      \end{cases}
    \end{aligned}
  \end{equation}
  with parameters~$A_2,\, A_5$ and~$A_6$.

\item {\bf Linear case}

  Arguing as above, we get the~following normal forms:
  \begin{equation}
    \label{eq:linear_normal}
    \begin{cases}
      \begin{gathered}
        \pm 2\, x\, h h_x^3 + \left(
          \mp h^2 + A_2 x^2 + A_5
        \right) h_x^2 - x^2 = 0,
      \end{gathered}\\
      \begin{gathered}
        h^2 h_x^3 + \left(
          \pm A_2 h^2 \pm x^2 + A_6
        \right) h_x \mp 2\, x h = 0
      \end{gathered}
    \end{cases}
  \end{equation}
  with parameters~$A_2,\, A_5$ and~$A_6$.
\end{enumerate}

\subsection{Algebraic form of equations}
\label{ch:algebraic}

In the~linear case~\eqref{eq:linear_normal} system of~equations on~$h,\, h_x$ is algebraic, so it defines an~algebraic curve in the~space~$(x,\,h,\,h_x)$.
Let us show that in the~remaining cases solutions can be described as algebraic curves as well, if we choose appropriate coordinates.

Actually it follows from the fact that a~circle and a~hyperbola are rational curves.
Therefore pairs of functions~$(\cos(x), \sin(x))$ and~$(\cosh(x), \sinh(x))$ can be transformed to rational functions by a~coordinate change.
Namely, in the~trigonometric case we locally set
\begin{equation}
  r = \tan(\frac{x}{2}),\quad
  \cos(x) = \frac{1-r^2}{1+r^2},\quad
  \sin(x) = \frac{2r}{1+r^2}.
\end{equation}
In the~hyperbolic case we set (globally)
\begin{equation}
  r = e^x,\quad
  \cosh(x) = \frac{r^2+1}{2r},\quad
  \sinh(x) = \frac{r^2-1}{2r}.
\end{equation}
Next, let~$p = h_x$ (actually $h_r$ depends rationally on~$h_x$, but we work with the~$x$-derivative for convenience).
Then, in the~coordinates~$(r,\, h,\, p)$ normal forms can be rewritten as follows:
\begin{equation}
  \begin{cases}
    \begin{gathered}
      \pm 2 (1 - r^4) h p^3 - (1 - r^2)^2\\
      + \left(
        \pm 2 r h^2 + 4 A_2 r + A_5 (1 + r^2)
      \right) (1 + r^2) p^2 = 0,
    \end{gathered}\\
    \begin{gathered}
      4 (1 + r^2) h p^3 \mp 8 (1 - r^2) h\\
      + \left(
        (1 + r^2) h^4 \pm 4 A_2 (1 + r^2) h^2 + A_6 (1 + r^2) \pm 16 r
      \right) p = 0
    \end{gathered}
  \end{cases}
\end{equation}
in the~trigonometric case and
\begin{subequations}
  \begin{align}
    & \begin{cases}
      \label{eq:form_global}
      \begin{gathered}
        \pm 4 r (r^2 + 1) h p^3 - (r^2 + 1)^2\\
        + \left(
          \mp 2 r (r^2 - 1) h^2 + 4 A_2 r (r^2 - 1) + 4 A_5 r^2
        \right) p^2  = 0,
      \end{gathered}\\
      \begin{gathered}
        4 r h^2 p^3 \mp 4 (r^2 + 1) h\\
        + \left(
          - r h^4 \pm 4 A_2 r h^2 \pm 4 r^2 - A_6 r - 4
        \right) p = 0,
      \end{gathered}
    \end{cases}\\
    & \begin{cases}
      \begin{gathered}
        \pm 4 r (r^2 - 1) h p^3 - (r^2 - 1)^2\\
        + \left(
          \mp 2 r (r^2 + 1) h^2 + 4 A_2 r (r^2 + 1) + 4 A_5 r^2
        \right) p^2  = 0,
      \end{gathered}\\
      \begin{gathered}
        4 r h^2 p^3 \mp 4 (r^2 - 1) h\\
        + \left(
          - r h^4 \pm 4 A_2 r h^2 \pm 4 r^2 - A_6 r + 4
        \right) p = 0,
      \end{gathered}
    \end{cases}\\
    & \begin{cases}
      \begin{gathered}
        \pm 2 r h p^3 - r^2 + \left(
          \mp r h^2 + 2 A_2 r + A_5
        \right) p^2  = 0,
      \end{gathered}\\
      \begin{gathered}
        4 h^2 p^3 \mp 8 r h + \left(
          - h^4 \pm 4 A_2 h^2 \pm 8 r - A_6
        \right) p = 0
      \end{gathered}
    \end{cases}
  \end{align}
\end{subequations}
in the~hyperbolic case.

\subsection{Existence of metrics globally defined on a~sphere}

Suppose our~local coordinate chart~$(x, y)$ is an~almost global chart on a~sphere, so that it covers the~whole sphere except for the~poles.
Suppose also that different values of~$x$ correspond to different~latitudes, and poles of a~sphere are limiting points as~$x \to \pm \infty$.
Similarly, let different values of~$y$ correspond to different longitudes.
If a~solution of~\eqref{eq:main} is defined on the~whole real axis, and the~corresponding metric extends smoothly to the~poles (together with the~integrals), then such superintegrable system is defined globally on a~sphere.

The~metric is well-defined if~$h_x = p \neq 0$ for all~$x \in \R$.
Clearly, this condition holds for the~equations with algebraic form~\eqref{eq:form_global}.
Next, the~metric is defined on the~whole real axis (and is non-degenerate) if~$h_x$ does not become infinity at any point.

\begin{proposition}
  There exists an~uncountable set of parameter tuples~$(A_2,\, A_5,\, A_6)$ such that the~system~\eqref{eq:form_global} with the~lower sign has no points~$R \in (0, \infty)$ such that~$\lim\limits_{r \to R} p(r) = \infty$.
\end{proposition}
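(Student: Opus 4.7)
The plan is to show that the locus of ``bad'' parameter triples --- those admitting some $R \in (0,\infty)$ with $\lim_{r\to R} p(r) = \infty$ --- is contained in a codimension-one algebraic subset of $\R^3$, whose complement contains a non-empty open set and is therefore uncountable.

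First I would pass to the chart $q = 1/p$ and clear denominators, rewriting the system~\eqref{eq:form_global} (lower sign) as a pair of polynomial equations in $(r, h, q)$. A blow-up $p(r) \to \infty$ as $r \to R \in (0,\infty)$ corresponds to a real branch of this algebraic curve approaching a point $(R, h_0, 0)$. The leading $p^3$ coefficient of the first equation is $-4r(r^2+1)h$, which immediately forces $h_0 = 0$; a parallel analysis in the chart $(u,q) = (1/h, 1/p)$ rules out joint blow-up $h, p \to \infty$ at a finite $R > 0$, because its two leading relations combine to $R^2 - 1 = \pm(R^2 + 1)$, with no positive solution. Hence blow-up may occur only along the axis $\{h = 0,\; q = 0\}$.

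Second I would perform a Newton-polygon analysis at the candidate point $(R, 0, 0)$. Linearizing the first equation yields a regular branch $h = \alpha(R)\,q + O(q^2)$ with
\[
  \alpha(R) \;=\; \frac{A_2(R^2-1) + A_5 R}{R^2+1}.
\]
The second equation vanishes to first order; its leading quadratic part reads $4R\,h^2 - (4R^2 + A_6 R + 4)\,q^2 = 0$. Substituting $h = \alpha(R)\,q$ gives the necessary scalar condition
\[
  A_6 + 4R + \frac{4}{R} \;=\; 4\,\alpha(R)^2,
\]
so every blow-up abscissa $R$ solves this single equation in $(A_2, A_5, A_6, R)$.

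Third I would use elementary estimates to avoid this condition. From $|R^2-1|/(R^2+1)\le 1$ and $R/(R^2+1)\le 1/2$ one gets $|\alpha(R)|\le |A_2|+|A_5|/2$ uniformly in $R$; combined with the AM--GM bound $4R+4/R\ge 8$, the compatibility equation fails strictly as soon as
\[
  A_6 \;>\; 4\bigl(|A_2| + |A_5|/2\bigr)^2 - 8.
\]
This open half-space in $\R^3$ is uncountable, and every point in it is a valid tuple for the proposition.

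The main obstacle is the Newton-polygon bookkeeping: one must verify that the two leading relations above genuinely exhaust all real local branches of the curve through $(R, 0, 0)$. The delicate strata --- in particular $\alpha(R) = 0$ (which, combined with the compatibility equation, demands $A_6 = -4R - 4/R \le -8$, automatically outside the open set above) and the vertex $R = 1$ of the joint-blow-up chart (requiring a slightly higher-order expansion) --- each impose additional algebraic constraints on the parameters, so they remain confined to a thin subvariety disjoint from the open region already constructed.
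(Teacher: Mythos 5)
Your proposal is correct, and its core coincides with the paper's argument: at a putative blow-up abscissa $R$ both equations of~\eqref{eq:form_global} force $h\to 0$ with $hp$ tending to a finite limit, and comparing the value of $\lim hp$ extracted from the first equation with the value of $\lim (hp)^2$ extracted from the second gives exactly your compatibility condition $4\alpha(R)^2 = A_6 + 4R + 4/R$, which is the paper's equation $\left(A_2(R^2-1)+A_5R\right)^2 R = \left(R^2+\tfrac{A_6}{4}R+1\right)(R^2+1)^2$. The difference lies in the finishing step and in the level of rigor. The paper fixes $A_6\ge 0$, $A_2\neq 0$ and argues qualitatively, by comparing growth rates (degree $5$ versus degree $6$) and plots of the two sides, that for sufficiently small $A_2,A_5$ the curves do not meet; your AM--GM estimate $|\alpha(R)|\le |A_2|+|A_5|/2$ together with $4R+4/R\ge 8$ replaces this with the explicit open half-space $A_6 > 4\left(|A_2|+|A_5|/2\right)^2-8$, which is cleaner and immediately uncountable. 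You also explicitly exclude the joint blow-up $h,p\to\infty$, which the paper passes over by tacitly assuming $h$ has a finite limit at $R$; since along a branch of the algebraic curve $h$ does have a (possibly infinite) limit, your check via the incompatible balances $h/p\to 2(R^2+1)/(R^2-1)$ from the first equation and $h/p\to\pm 2$ from the second genuinely closes a small gap. The degenerate strata you flag are indeed harmless: $\alpha(R)=0$ forces $A_6=-(4R+4/R)\le -8$, outside your half-space, and the $R=1$ vertex is excluded unconditionally because the second equation still forces $h/p\to\pm 2$, after which the $p^4$-term of the first equation cannot be cancelled.
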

\begin{proof}
  Let~$R \in (0, \infty)$ be a~point such that~$\lim\limits_{r \to R} p(r) = \infty$.
  Then the~leading coefficient (as a~polynomial in~$p$) should vanish at the~point~$R$ for both equations~\eqref{eq:form_global}, hence~$h(R) = 0$.
  Let us rewrite our system of equations (with the~lower sign) as follows:
  \begin{equation}
    \begin{cases}
      \begin{gathered}
        \left(
          -4 r (r^2 + 1) h p - 2 r (r^2 - 1) h^2 + 4 A_2 r (r^2 - 1) + 4 A_5 r^2
        \right) p^2
        - (r^2 + 1)^2 = 0,
      \end{gathered}\\
      \begin{gathered}
        \left(
          4 r h^2 p^2 - r h^4 - 4 A_2 r h^2 - 4 r^2 - A_6 r - 4
        \right) p
        + 4 (r^2 + 1) h = 0.
      \end{gathered}
    \end{cases}
  \end{equation}
  Therefore, expressions
  \begin{gather*}
    -4 r (r^2 + 1) h p - 2 r (r^2 - 1) h^2 + 4 A_2 r (r^2 - 1) + 4 A_5 r^2,\\
    4 r h^2 p^2 - r h^4 - 4 A_2 r h^2 - 4 r^2 - A_6 r - 4
  \end{gather*}
  vanish at~$R$, so
  \begin{gather*}
    \lim_{r \to R} h(r) p(r) = \frac{A_2 (R^2 - 1) + A_5 R}{R^2 + 1},\\
    \lim_{r \to R} h(r)^2 p(r)^2 = \frac{R^2 + \frac{A_6}{4} R + 1}{R}.
  \end{gather*}
  Comparing these expressions, we find:
  \begin{gather*}
    \left(
      \frac{A_2 (R^2 - 1) + A_5 R}{R^2 + 1}
    \right)^2 = \frac{R^2 + \dfrac{A_6}{4} R + 1}{R},\\
    \left(
      A_2 (R^2 - 1) + A_5 R
    \right)^2 R = \left(
      R^2 + \frac{A_6}{4} R + 1
    \right) (R^2 + 1)^2.
  \end{gather*}
  For~$A_2 \neq 0,\, A_6 \geq 0$ plots of left-hand side (LHS) and right-hand side (RHS) of the~last equation can be qualitatively drawn as in~Fig.~\ref{fig:global} (LHS is drawn in red, RHS is drawn in blue).
  \begin{figure}[t]
    \centering
    \includegraphics[width=0.3\textwidth]{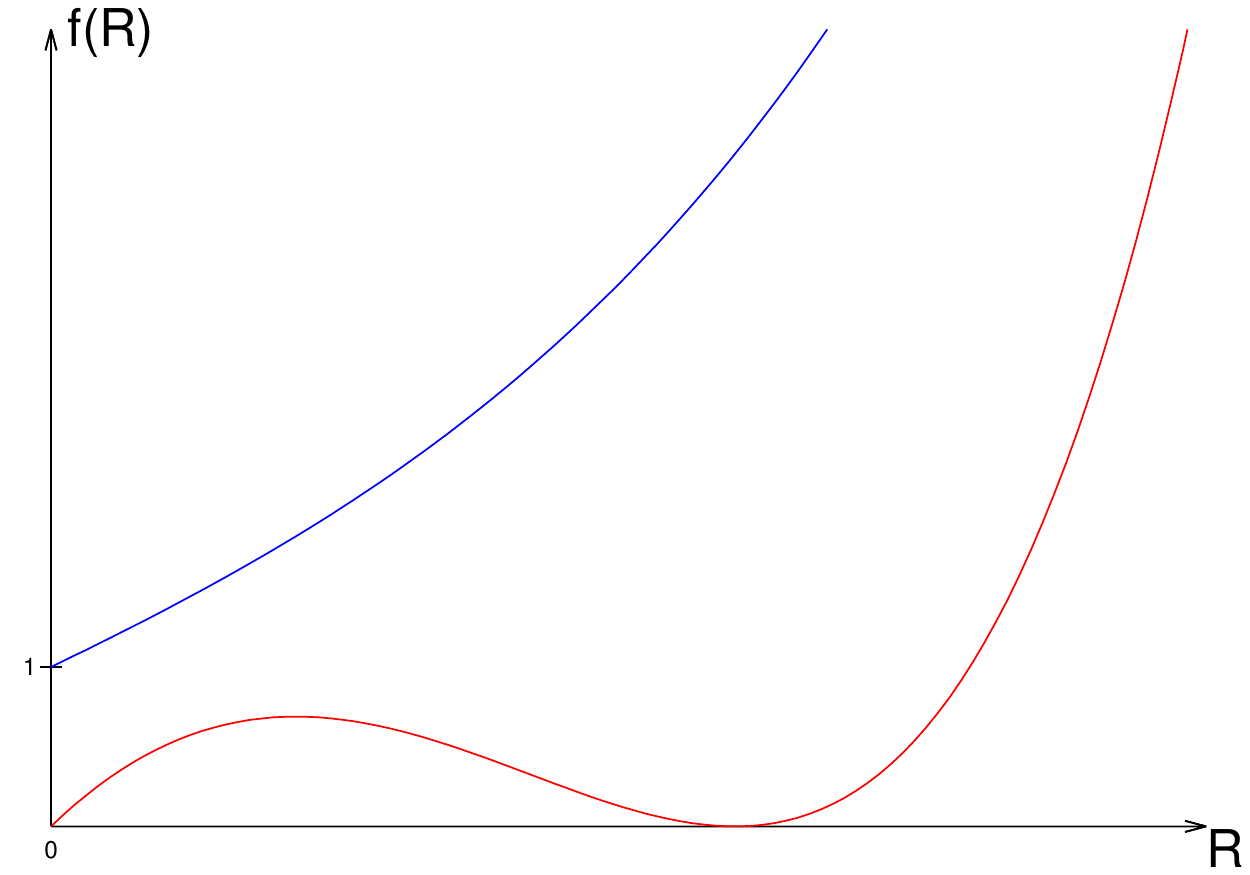}
    \caption{LHS vs. RHS}
    \label{fig:global}
  \end{figure}
  Besides, LHS grows as~$R^5$, while RHS grows as~$R^6$.
  To conclude the~proof, it remains to~note that the~plots do not intersect for uncountably many sets of parameters~$(A_2,\, A_5,\, A_6)$ (e.g., sufficiently small~$A_2$ and~$A_5$ with fixed~$A_6$), therefore~$p(r)$ is finite at any point.
\end{proof}

So, the~solution is defined on the~whole real axis for parameter values mentioned above.
We now show that the~metric extends smoothly to the~poles.
Since the~solution lies on an~algebraic curve, it follows that~$h(r)$ and~$p(r)$ can be expanded as Puiseux series in a~neighborhood of~$r = 0$:
\begin{equation}
  \label{eq:puiseux}
  \begin{gathered}
    h(r) = r^{m/n} \sum_{k = 0}^{\infty} a_k r^{k/n} = r^{m/n} f(r^{1/n}),\\
    p(r) = r h_r = r^{m/n} \sum_{k = 0}^{\infty} \frac{m+k}{n} a_k r^{k/n} = r^{m/n} f_1(r^{1/n}),
  \end{gathered}
\end{equation}
where~$m \in \Z,\, n \in \N$, $f$ and~$f_1$ are real analytic functions, and~$f(0) \neq 0$.
Substituting~\eqref{eq:puiseux} for~$h(r)$ and~$p(r)$ in~\eqref{eq:form_global} one can easily see that the~expansions start with the~power~$- \frac{1}{2}$ or~$- \frac{1}{4}$, i.e.
\begin{equation}
  \begin{gathered}
    h(r) = r^{-1/n} f(r^{1/n}),\\
    p(r) = r^{-1/n} f_1(r^{1/n})
  \end{gathered}
\end{equation}
with~$n = 2$ or~$4$, and $f_1(0) \neq 0$.

Define new coordinates~$(\rho, \phi)$ by the~formulas~$\rho = r^{1/n} = e^{x/n},\, \phi = \dfrac{y}{n}$.
In coordinates~$(\rho, \phi)$ metric has the~form
\begin{equation}
  \begin{gathered}
    g = \frac{1}{h_x^2} (dx^2 + dy^2) =
    \frac{1}{p^2} (n^2\, \frac{d\rho^2}{\rho^2} + n^2 d\phi^2)
    = \frac{n^2}{f_1(\rho)^2} (d\rho^2 + \rho^2 d\phi^2).
  \end{gathered}
\end{equation}

Since~$f_1$ is real analytic and non-vanishing in a~neighborhood of the~origin~$\rho = 0$, it follows that the~metric~$g$ behaves as an~ordinary Euclidean metric in polar coordinates, therefore it extends smoothly to the~south pole.
By changing the~sign of~$x$ (i.e. switching the~poles), it can be proved that the~metric extends to the~north pole as well.

To~show the~extensibility of the~integrals, consider the~following coordinates:
\begin{equation}
  \xi = r \cos(y),\quad \eta = r \sin(y)
\end{equation}
and the~corresponding momenta~$\pxi,\, \peta$.
South pole is the~origin~$(0,\, 0)$ in new coordinates.
The~momenta~$p_x,\, p_y$ are given by
\begin{equation}
  \begin{gathered}
    p_x = \xi\, \pxi + \eta\, \peta,\\
    p_y = \xi\, \peta - \eta\, \pxi,
  \end{gathered}
\end{equation}
It follows that~$L = p_y$ extends smoothly to the~origin.
By the~pole switching argument $L$ extends also to the~north pole.

Quartic integral~$F_1$ has the~following form in new coordinates:
\begin{equation*}
  \begin{gathered}
    F_1 = 
    \left[
      \xi \eta^4 + \frac{1}{2} \xi r h_x^2 \Big(
        \eta^2 (h^2 + 2 h h_x) - 2 r^2 h_x^2 - 2A_2 \eta^2
      \Big)
    \right] \pxi^4\\
    + \left[
      \frac{\eta^3}{2} (1 + \eta^2 - 7 \xi^2) - \eta r h_x^2 \left(
        \xi^2 (h^2 + h h_x) - \eta^2 h h_x - 2 \xi^2 A_2
      \right)
    \right] \pxi^3 \peta\\
    + \left[
      -\frac{3}{2} \xi \eta^2 (1 + \eta^2 - 3 \xi^2) - \frac{1}{2} \xi r^3 h_x^2 \left(
        4 h_x^2 - h^2 + 2 A_2
      \right)
    \right] \pxi^2 \peta^2\\
    + \left[
      \frac{1}{2} \xi^2 \eta (3 + 3 \eta^2 - 5 \xi^2) - \eta r h_x^2 \left(
        \xi^2 (h^2 + h h_x) - \eta^2 h h_x - 2 \xi^2 A_2
      \right)
    \right] \pxi \peta^3\\
    + \left[
      -\frac{\xi^3}{2} (1 + \eta^2 - \xi^2) - \frac{1}{2} \xi r h_x^2 \left(
        2 \eta^2 h h_x - \xi^2 h^2 + 2 r^2 h_x^2 + 2 A_2 \xi^2
      \right)
    \right] \peta^4.
  \end{gathered}
\end{equation*}

Since functions~$h(r),\, h_x(r)$ are of order~$r^{-1/2}$ as~$r \to 0$, it is sufficient to check that
\begin{equation*}
  \begin{gathered}
    h^2 + 2 h h_x,\quad
    \xi^2 (h^2 + h h_x) - \eta^2 h h_x,\quad
    \xi^2 h^2 - 2 \eta^2 h h_x
  \end{gathered}
\end{equation*}
have finite limits as~$r \to 0$.
It follows from the~expansions
\begin{equation*}
  \begin{gathered}
    h^2 = c_0^2 r^{-1} + \mbox{r.p.},\\
    h h_x = -\frac{1}{2} c_0^2 r^{-1} + \mbox{r.p.}
  \end{gathered}
\end{equation*}
(r.p. stands for the~regular part of a~series), and identity~$r^2 = \xi^2 + \eta^2$.
We see that~$F_1$ extends smoothly to the~south pole.
Once again, extensibility to the~north pole follows from the~pole switching argument.

Continuing this line of reasoning, one can easily find that~$F_2$ also extends smoothly to the~poles.
This completes the~construction of metrics globally defined on a~sphere~$S^2$.

\newpage


\bibliographystyle{utphys}
\bibliography{superintegrable}

\providecommand{\href}[2]{#2}\begingroup\raggedright\begin{thebibliography}{1}

\bibitem{Winternitz}
W.~{Miller}, Jr., S.~{Post}, and P.~{Winternitz}, ``{Classical and quantum
  superintegrability with applications},'' {\em Journal of Physics A
  Mathematical General} {\bfseries 46} (Oct., 2013) 3001P,
  \href{http://arxiv.org/abs/1309.2694}{{\ttfamily arXiv:1309.2694 [math-ph]}}.

\bibitem{Whittaker}
E.~Whittaker, {\em A Treatise on the Analytical Dynamics of Particles and Rigid
  Bodies}.
\newblock Cambridge Mathematical Library. Cambridge University Press, 1988.

\bibitem{Koenigs}
G.~Koenigs, {\em Sur les géodesiques a intégrales quadratiques, in: Note {II}
  from {D}arboux’ ‘{L}eçons sur la théorie générale des surfaces’},
  vol.~IV.
\newblock Chelsea Publishing, 1896.

\bibitem{Matveev_Shevchishin}
V.~S. Matveev and V.~V. Shevchishin, ``Two-dimensional superintegrable metrics
  with one linear and one cubic integral,'' {\em Journal of Geometry and
  Physics} {\bfseries 61} no.~8, (2011) 1353 -- 1377,
  \href{http://arxiv.org/abs/1010.4699}{{\ttfamily arXiv:1010.4699 [math-ph]}}.

\bibitem{Valent_Duval_Shevchishin}
G.~Valent, C.~Duval, and V.~Shevchishin, ``Explicit metrics for a class of
  two-dimensional cubically superintegrable systems,'' {\em Journal of Geometry
  and Physics} {\bfseries 87} (2015) 461 -- 481,
  \href{http://arxiv.org/abs/1403.0422}{{\ttfamily arXiv:1403.0422 [math-ph]}}.

\bibitem{Bolsinov_Matveev_Fomenko}
A.~V. Bolsinov, V.~S. Matveev, and A.~T. Fomenko, ``Two-dimensional
  {R}iemannian metrics with integrable geodesic flows. {L}ocal and global
  geometry,'' {\em Sbornik: Mathematics} {\bfseries 189} no.~10, (1998) 5--32.

\bibitem{Kruglikov}
B.~Kruglikov, ``Invariant characterization of liouville metrics and polynomial
  integrals,'' {\em Journal of Geometry and Physics} {\bfseries 58} no.~8,
  (2008) 979 -- 995, \href{http://arxiv.org/abs/0709.0423}{{\ttfamily
  arXiv:0709.0423 [math.DG]}}.

\end{thebibliography}\endgroup

\end{document}